\newtheorem{lemma}{Lemma}
\newtheorem{proposition}{Proposition}
\newtheorem{definition}{Definition}
\DeclareMathOperator*{\Var}{Var}
\begin{document}
%
% paper title
% Titles are generally capitalized except for words such as a, an, and, as,
% at, but, by, for, in, nor, of, on, or, the, to and up, which are usually
% not capitalized unless they are the first or last word of the title.
% Linebreaks \\ can be used within to get better formatting as desired.
% Do not put math or special symbols in the title.
\title{Age of Information: The Gamma Awakening}

% author names and affiliations
% use a multiple column layout for up to three different
% affiliations
\author{\IEEEauthorblockN{Elie Najm and Rajai Nasser}
\IEEEauthorblockA{LTHI, EPFL, Lausanne, Switzerland\\
Email: \{elie.najm, rajai.nasser\}@epfl.ch}
}

% conference papers do not typically use \thanks and this command
% is locked out in conference mode. If really needed, such as for
% the acknowledgment of grants, issue a \IEEEoverridecommandlockouts
% after \documentclass

% for over three affiliations, or if they all won't fit within the width
% of the page, use this alternative format:
% 
%\author{\IEEEauthorblockN{Michael Shell\IEEEauthorrefmark{1},
%Homer Simpson\IEEEauthorrefmark{2},
%James Kirk\IEEEauthorrefmark{3}, 
%Montgomery Scott\IEEEauthorrefmark{3} and
%Eldon Tyrell\IEEEauthorrefmark{4}}
%\IEEEauthorblockA{\IEEEauthorrefmark{1}School of Electrical and Computer Engineering\\
%Georgia Institute of Technology,
%Atlanta, Georgia 30332--0250\\ Email: see http://www.michaelshell.org/contact.html}
%\IEEEauthorblockA{\IEEEauthorrefmark{2}Twentieth Century Fox, Springfield, USA\\
%Email: homer@thesimpsons.com}
%\IEEEauthorblockA{\IEEEauthorrefmark{3}Starfleet Academy, San Francisco, California 96678-2391\\
%Telephone: (800) 555--1212, Fax: (888) 555--1212}
%\IEEEauthorblockA{\IEEEauthorrefmark{4}Tyrell Inc., 123 Replicant Street, Los Angeles, California 90210--4321}}

% use for special paper notices
%\IEEEspecialpapernotice{(Invited Paper)}

% make the title area
\maketitle

% As a general rule, do not put math, special symbols or citations
% in the abstract
\begin{abstract}
Status update systems is an emerging field of study that is gaining interest in the information theory community. We consider a scenario where a monitor is interested in being up to date with respect to the status of some system which is not directly accessible to this monitor. However, we assume a source node has access to the status and can send status updates as packets to the monitor through a communication system. We also assume that the status updates are generated randomly as a Poisson process. The source node can manage the packet transmission to minimize the age of information at the destination node, which is defined as the time elapsed since the last successfully transmitted update was generated at the source. We use queuing theory to model the source-destination link and we assume that the time to successfully transmit a packet is a gamma distributed service time. We consider two packet management schemes: LCFS with preemption and LCFS without preemption. We compute and analyze the average age and the average peak age of information under these assumptions. Moreover, we extend these results to the case where the service time is deterministic.
\end{abstract}

% no keywords

% For peer review papers, you can put extra information on the cover
% page as needed:
% \ifCLASSOPTIONpeerreview
% \begin{center} \bfseries EDICS Category: 3-BBND \end{center}
% \fi
%
% For peerreview papers, this IEEEtran command inserts a page break and
% creates the second title. It will be ignored for other modes.
\IEEEpeerreviewmaketitle

\section{Introduction}
\label{Intro}
In status update systems, one or several sources send information updates to one or several monitors at a certain effective rate $\lambda$. Naturally, the goal of this process is to ensure that the status updates are as timely as possible at the receiver side. For this purpose, \cite{realtime} uses the term \emph{age}, to refer to the time elapsed since the generation --- at instant \emph{u(t)} --- of the newest packet available at the receiver. Formally, the age of such packet is $\Delta(t)=t-u(t)$ and the timeliness requirement at the monitor corresponds to a small average age. Indeed, real-time status updating can be modeled as a source feeding packets at rate $\lambda$ to a queue which delivers them to the monitor with some delay. Hence, the requirement at the destination translates into finding the optimal transmission scheme and/or the optimal effective update rate $\lambda$ at the source that minimizes
\begin{equation}
 \label{eq1}
 \Delta = \lim_{\tau\to\infty}\frac{1}{\tau}\int_0^\tau \Delta(t) \mathrm{d}t.
\end{equation}
However, numerous factors affect the evaluation of ~\eqref{eq1} such as the model of the source update process, the number of sources, the model of the queue, the number of queues available, etc.

Kaul et al. in \cite{realtime} solve one aspect of the problem where they consider a single source generating packets as a rate $\lambda$ Poisson process feeding them to a single First Come First Served (FCFS) queue with exponential service time. Using Kendall notation, this is an FCFS $M/M/1$ system. Moreover, the authors also consider the cases of deterministic source and exponential service time, i.e., FCFS $D/M/1$ system, as well as random source and deterministic service time, i.e., FCFS $M/D/1$ system.

Yates and Kaul in \cite{multiple} generalize the problem solved in \cite{realtime} by considering the presence of multiple sources sending updates through the same FCFS queue to the same monitor. Along the same generalization direction as in \cite{multiple}, one may ask: what would happen if we increase the number of queues available, i.e., if the source is able to serve multiple updates at the same time? This question is tackled in \cite{random}, where a single Poisson process is sending updates over an infinite number of queues with exponential service time.\\
\indent However, in these aforementioned works, the authors mostly consider FCFS queues. One would focus on Last Come First Served (LCFS) type of queues since they are intuitively more suitable for the problem in hand: we are interested in delivering the newest update to the monitor, which means we gain more by sending the \lq\lq youngest\rq\rq\ packet in the queue first. This idea is developed in \cite{queues} where the authors derive an expression for \eqref{eq1} by treating the following two models while assuming exponential interarrival and service time: $(i)$ LCFS queue without preemption; if the queue is busy, any new update will have to wait in a buffer of size 1. This means that the new update will replace any older packet already waiting to be served. $(ii)$ LCFS with preemption, where contrary to the first case, any new update will prompt the source to drop the packet being served and start transmitting the newcomer. In ~\cite{queues}, it is shown that an LCFS queue with preemption achieves a lower average age compared to the model without preemption. However, both models outperform the FCFS model presented in \cite{realtime}.

In this paper, we also consider these last two schemes in order to derive closed form expressions for \eqref{eq1}. However, the main novelty is the assumption of a gamma distribution for the service time in age of information problems. The motivation for such a distribution is twofold: 
\begin{itemize}
 \item Based on the classical applications of gamma distributions in queuing theory, these distributions can be seen as a reasonable approximation if we want to model relay networks. Indeed, in such network, a transmitter and a receiver are separated by $k$ relays with each relay taking an exponential amount of time to complete transmission to the next hop. This means that the total transmission time is the sum of $k$ independent exponential random variables which induces a gamma distribution.
 \item As we will see later, a deterministic random variable can be seen as the limit of a sequence of gamma distributed random variables. Therefore, one can study the performance of the LCFS-based schemes under deterministic service time by taking the limit of the result obtained for a gamma distributed service time. Although this is an indirect method of calculating ~\eqref{eq1}, it is simpler than the direct approach.
\end{itemize}
This paper is organized as follows: in Section~\ref{PC} we present the preliminary results that will be used throughout our work and define the average peak age as an alternative metric. In Section~\ref{preempt} we derive the closed form expressions for both the average age and the average peak age when assuming an LCFS scheme with preemption. On the other hand, Section~\ref{nopreempt} computes the formulas for these quantities when considering an LCFS queue without preemption. In these last two sections the service time is assumed to be gamma distributed. However, in Section~\ref{deterministic} we calculate the two ages for a deterministic service time for each of the two schemes. Finally, Section~\ref{numerical} presents numerical simulations that validate our theoretical results.

%----------------------------------------------------------------------------
% Preliminary Calculations
%----------------------------------------------------------------------------

\section{Preliminaries}
\label{PC}
\subsection{General definitions}

\begin{figure*}[!t]
\centering
\subfloat[Age of information for LCFS with preemption scheme]{\includegraphics[scale=0.5]{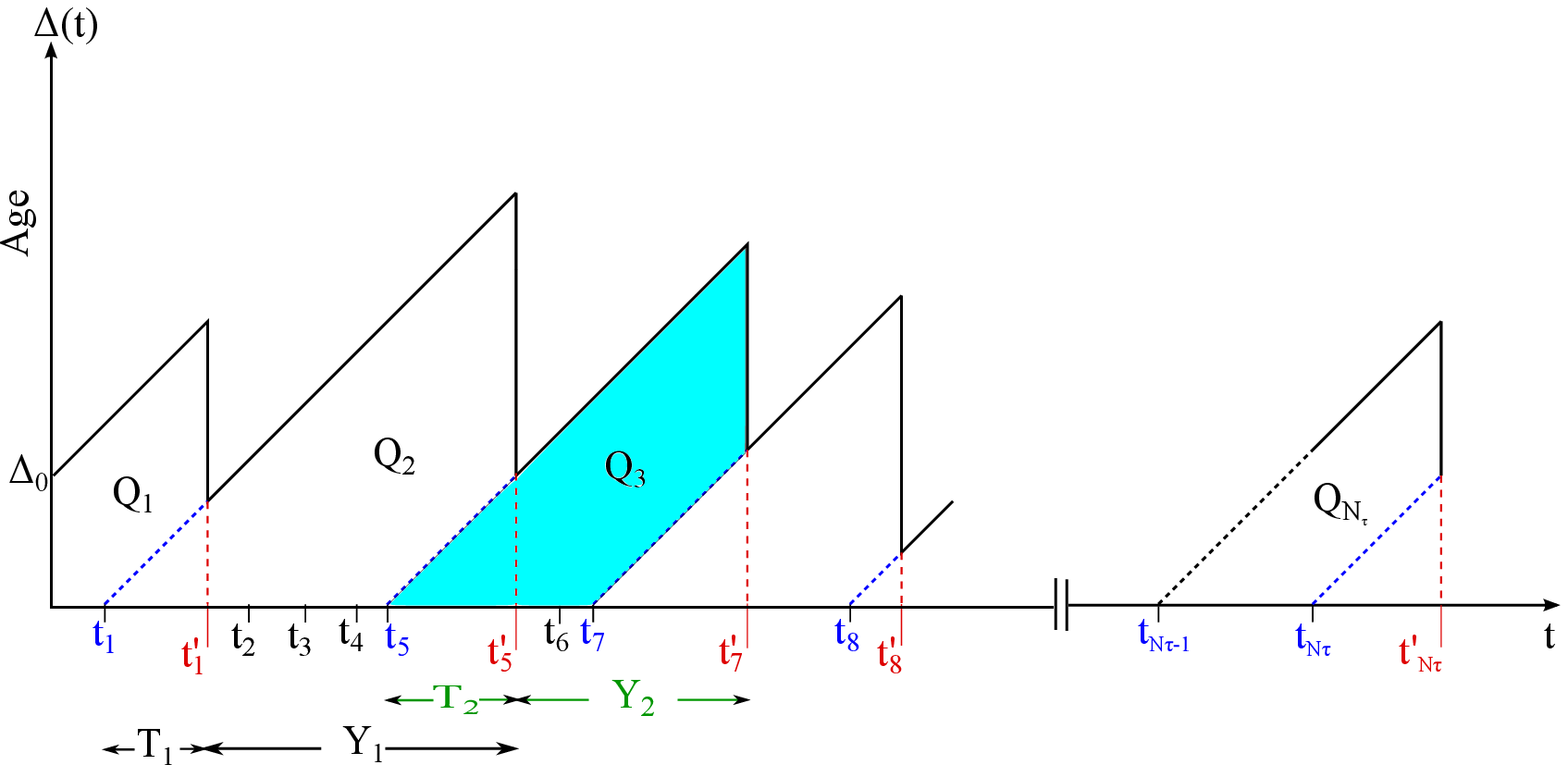}%
\label{fig1a}}
\hfil
\subfloat[Age of information for LCFS without preemption scheme]{\includegraphics[scale=0.5]{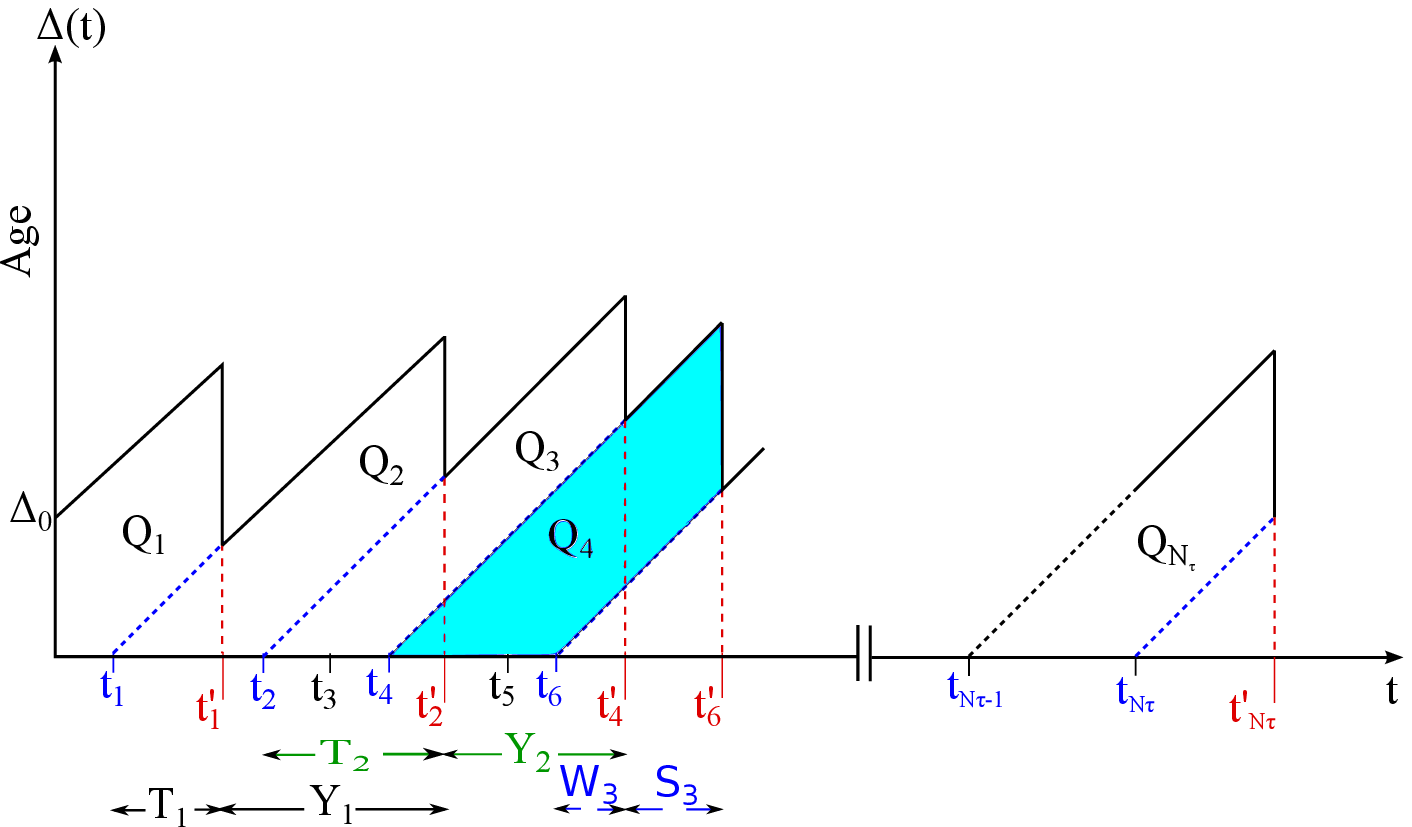}%
\label{fig1b}}
\caption{Variation of the instantaneous age for both schemes}
\label{fig1}
\end{figure*}

As we have seen, our two schemes of interest are LCFS with preemption and LCFS without preemption. The variation of the instantaneous age for these two scenarios is given in Figure~\ref{fig1}. The saw-tooth pattern depicted in those figures is due to the following behavior of the age.
Let $t_i$ be the time the $i^{th}$ packet is generated and let $t'_i$ be the time the $i^{th}$ packet is received (if it is successfully received). Moreover, without loss of generality, we assume the beginning of observation is at time $t= 0$ and that the queue is empty at this instant with an initial age of $\Delta_0$. The age $\Delta(t)$ increases linearly with time and is set to a smaller value when a packet is received. Hence, the instantaneous age is equal to the current time minus the generation time of the newest of the received packets.% if update $i$ is generated (or arrives) at time $t_i$ and is received by the monitor (or departs from the queue) at time $t'_i$, the age at time $t'_i$ is set to $\Delta(t'_i)=T_i=t'_i-t_i$. The age $T_i$ corresponds also to the system time of update $i$, i.e., the time spent by the packet $i$ in the queue. This explains the saw-tooth pattern in both figures.\\

It is important to note that in both schemes of interest, some packets might be dropped. Hence we call the packets that are not dropped, and thus delivered to the receiver, as \lq\lq successfully received packets\rq\rq\ or \lq\lq successful packets\rq\rq. In addition to that, we also define: $(i)$ $I_i$ to be the true index of the $i^{th}$ successfully received packet, $(ii)$ $Y_i= t'_{I_{i+1}}-t'_{I_i}$ to be the interdeparture time between two consecutive successfully received packets, $(iii)$ $X_i = t_{I_i+1} - t_{I_i}$ to be the interarrival time between the successfully transmitted packet and the next generated one (which may or may not be successfully transmitted), so $f_X(x) = \lambda e^{-\lambda x}$, $(iv)$ $T_i$ to be the system time, or the time spent by the $i^{th}$ successful packet in the queue and $(v)$ $N_\tau=\max\left\{n;t_{I_n}\leq\tau\right\}$, the number of successfully received packets in the interval $[0,\tau]$.
%\begin{itemize}
 %\item $I_k$ to be the true index of the $k^{th}$ successfully received packet.
 %\item $S_i$ to be the service time of the $i^{th}$ packet, so $f_S(s) = \frac{s^{k-1}e^{-\frac{s}{\theta}}}{\theta^k\Gamma(k)}$ 
 %\item $Y_k= t'_{I_{k+1}}-t'_{I_k}$ be the interdeparture time between two consecutive successfully received packets.
 %\item $X_i = t_{i+1} - t_i$ be the interarrival time between two consecutive generated packets, so $f_X(x) = \lambda e^{-\lambda x}$.
 %\item $N_\tau=\max\left\{n;t_{I_n}\leq\tau\right\}$, the number of successfully received packets in the interval $[0,\tau]$.
 %\item For simplicity, we take $\tau = t'_{I_n}$.
%\end{itemize}

\subsection{Computing the Average Age}
Using these quantities and Figures~\ref{fig1a} and \ref{fig1b}, the authors in \cite{queues} show that 
\begin{align}
\label{eq2}
 \Delta &= \lim_{\tau\to\infty} \frac{1}{\tau}\int_0^{\tau}\Delta(t)dt=\lambda_e \mathbb{E}(Q_i),
	%&= \lim_{\tau\to\infty} \frac{1}{\tau}\left[\left(Q_1+\frac{T_{N_{\tau}}^2}{2}\right)+\sum_{i=2}^{N_{\tau}}Q_i\right]\nonumber\\
	%&= \lim_{\tau\to\infty} \frac{1}{\tau}\sum_{i=2}^{N_{\tau}}Q_i \nonumber\\
	%&= \lim_{\tau\to\infty} \frac{N_{\tau}-1}{\tau}\frac{1}{N_{\tau}-1}\sum_{i=2}^{N_{\tau}}Q_i.
\end{align}
%The last equality is due to the fact that, with probability 1, $Q_1+\frac{T_{N_{\tau}}^2}{2}$ is finite.
%Assuming ergodicity of the process $\{Q_i\}$, \eqref{eq2} becomes,
%\begin{equation}
 %\label{eq3}
 %\Delta = \lambda_e \mathbb{E}(Q_i),
%\end{equation}
where $\lambda_e$ is the effective update rate and $\mathbb{E}(Q_i)$ is the expected value of the area $Q_i$ at steady state. Hence, we need to determine these two quantities.
 
\subsubsection{Computing the effective rate}

%To calculate the effective rate $\lambda_e$ we will be using the following observation:
As stated in \cite{ontheage},
\begin{equation}
 \label{eq4}
 \lambda_e = \lambda\cdot\mathbb{P}\left(\{\text{packet is received successfully}\}\right)
\end{equation}
where $\mathbb{P}\left(\{\text{packet is received successfully}\}\right)$ is the probability that a packet in the queue will be delivered to the receiver.

\subsubsection{Computing $\mathbb{E}(Q_i)$}
Based on Figures~\ref{fig1a} and ~\ref{fig1b}, it was shown in \cite{queues} that%we notice that the area $Q_i$ can be written as $Q_i = \frac{(T_{i-1}+Y_{i-1})^2}{2}-\frac{T_i^2}{2}$.
%\begin{equation}
 %\label{eq5}
 %Q_i = \frac{(T_{i-1}+Y_{i-1})^2}{2}-\frac{T_i^2}{2}.
%\end{equation}
%Hence, we get
\begin{align}
 \label{eq6}
 \mathbb{E}(Q_i) %&= \mathbb{E}\left(\frac{(T_{i-1}+Y_{i-1})^2}{2}\right)-\mathbb{E}\left(\frac{T_i^2}{2}\right)\nonumber\\
		 &= \mathbb{E}\left(T_{i-1}Y_{i-1}\right)+\mathbb{E}\left(\frac{Y_{i-1}^2}{2}\right).
\end{align}
%This equality comes from the fact that at steady state, $T_{i-1}$ and $T_i$ have the same distribution and hence moments. 
%As we can notice, no assumptions on the transmission scheme, the interarrival time or the service time models were needed to perform those calculations.

\subsection{Computing the Average Peak Age}
Another metric of interest is the average peak age. We define the peak age as $$\displaystyle P_i=\lim_{\substack{t\to t'_{I_i}\\t<t'_{I_i}}} \Delta(t),$$ which is the value of the instantaneous age just before it is reduced by the reception of the $i^{th}$ successful packet.
%\begin{equation}
 %\label{eq7}
 %P_k = \max\left\{\Delta(t); t'_{I_{k-1}}< t\leq t'_{I_k}\right\}.
%\end{equation}
From Figures~\ref{fig1a} and \ref{fig1b}, we can deduce that the peak age can be written as $P_i = T_{i-1} + Y_{i-1}$.
%\begin{equation}
 %\label{eq8}
 %P_k = T_{k-1} + Y_{k-1}.
%\end{equation}
Therefore, the average peak age is given by:
\begin{equation}
 \label{eq9}
 \mathbb{E}(P_i) = \mathbb{E}(T_{i-1}) + \mathbb{E}(Y_{i-1}).
\end{equation}

\subsection{Defining the service time}

All the above results were obtained without any assumption on the service time. However, as we have discussed before, this paper studies two models for the service time: a gamma distributed service time with parameters $(k,\theta)$ and a deterministic service time. Here is a brief description of the gamma distribution.
\begin{definition}
 \label{def1}
 A random variable $S$ with gamma distribution $\Gamma(k,\theta)$ has the following probability density function: $$f_S(s)=\frac{s^{k-1}e^{-\frac{s}{\theta}}}{\theta^k\Gamma(k)}.$$ The Erlang distribution $E(k,\theta)$ is a special case of the gamma distribution where $k\in \mathbb{N}$.
\end{definition}
Such random variable has a mean of $\mathbb{E}(S)=k\theta$ and a variance $\Var(S)=k\theta^2$. These quantities will come in handy later on. Another important property of gamma random variables is given by the following lemma:
\begin{lemma}
 \label{lemma3}
 Suppose $S_n\sim\Gamma(k_n,\theta_n)$ is a sequence of random variables such that $\mathbb{E}(S_n)=\frac{1}{\mu}$, for some $\mu>0$. Then the sequence $S_n$ converges in distribution to a deterministic variable $Z$ as $k$ becomes very large, i.e, $$S_n \overset{d}{\to} Z,\ \text{as\ } k\to\infty,$$ where $Z=\frac{1}{\mu}$ with probability 1.
\end{lemma}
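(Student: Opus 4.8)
The plan is to exploit the fact that fixing the mean of a gamma random variable while letting its shape parameter grow forces its variance to vanish, so that $S_n$ concentrates around its mean. First I would express $\theta_n$ in terms of $k_n$ using the mean constraint: since $\mathbb{E}(S_n) = k_n\theta_n = \frac{1}{\mu}$, we have $\theta_n = \frac{1}{\mu k_n}$. Plugging this into the variance formula recalled just before the lemma, $\Var(S_n) = k_n\theta_n^2 = \frac{1}{\mu^2 k_n}$, which tends to $0$ as $k_n \to \infty$.

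Next I would invoke Chebyshev's inequality: for every $\varepsilon > 0$,
$$\mathbb{P}\bigl(|S_n - \tfrac{1}{\mu}| \geq \varepsilon\bigr) \leq \frac{\Var(S_n)}{\varepsilon^2} = \frac{1}{\mu^2\varepsilon^2 k_n} \longrightarrow 0 \quad \text{as } k_n \to \infty,$$
so $S_n \to \frac{1}{\mu}$ in probability. Since convergence in probability implies convergence in distribution, and since the limit $Z = \frac{1}{\mu}$ is deterministic, this already gives $S_n \overset{d}{\to} Z$, which is the claim. (When the limit is a constant the two modes of convergence are in fact equivalent, so nothing is lost by passing through convergence in probability.)

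As an alternative that some readers may find cleaner, I could argue via transforms: the moment generating function of $\Gamma(k_n,\theta_n)$ is $M_{S_n}(t) = (1-\theta_n t)^{-k_n}$ for $t < \frac{1}{\theta_n}$, and with $\theta_n = \frac{1}{\mu k_n}$ this becomes $\bigl(1 - \frac{t}{\mu k_n}\bigr)^{-k_n} \to e^{t/\mu}$ as $k_n \to \infty$, which is exactly the MGF of the constant $\frac{1}{\mu}$; the corresponding statement for characteristic functions combined with L\'evy's continuity theorem then yields convergence in distribution. Either route works, and I would present the Chebyshev argument as the main proof since it needs only the mean and variance formulas already stated in the text.

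I do not anticipate a genuine obstacle here: the computation is short and elementary. The only points that warrant a line of care are the standard limit $\bigl(1 - \frac{t}{\mu k_n}\bigr)^{-k_n} \to e^{t/\mu}$ (valid once $k_n$ is large enough that $|t| < \mu k_n$, so the transform is finite) in the alternative proof, and the remark that convergence to a constant in probability coincides with convergence in distribution; everything else is bookkeeping with the moments of the gamma law.
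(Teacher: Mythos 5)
Your proof is correct. Note that the paper itself states Lemma~\ref{lemma3} without supplying any proof, treating it as a standard fact, so there is no argument in the text to compare against; your Chebyshev route (fix the mean $k_n\theta_n=\tfrac{1}{\mu}$, observe $\Var(S_n)=\tfrac{1}{\mu^2 k_n}\to 0$, conclude convergence in probability and hence in distribution to the constant $\tfrac{1}{\mu}$) is exactly the kind of short, elementary argument the authors are implicitly relying on, and your MGF alternative $\bigl(1-\tfrac{t}{\mu k_n}\bigr)^{-k_n}\to e^{t/\mu}$ is equally valid. The only cosmetic point is the paper's slight abuse of notation in writing the limit as $k\to\infty$ while indexing by $n$; your reading of this as $k_n\to\infty$ is the intended one.
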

The above lemma obviously still holds if $S_n\sim E(k_n,\theta_n)$. This lemma provides an additional motivation for studying the average age and the average peak age under the assumption of a gamma distributed service time since we can easily extend the results to the deterministic service time model by letting $k\to\infty$.

%----------------------------------------------------------------------------
% With preemption
%----------------------------------------------------------------------------
\section{Age of information for LCFS with preemption}
\label{preempt}
In this section we will compute the average age $\Delta$ and the average peak age $\mathbb{E}(P_k)$ for the Last Come First Served (LCFS) scheme with preemption and a gamma distributed service time. As we have seen before, in this scenario any packet being served is preempted if a new packet arrives and the new packet is served instead. Hence, the number of packets in the queue can be modeled as a continuous-time two-state semi-Markov chain depicted in Figure~\ref{fig2}.

The 0-state corresponds to the state where the queue is empty and no packet is being served while the 1-state corresponds to the state where the queue is full and is serving one packet. However, given that the interarrival time between packets is exponentially distributed with rate $\lambda$ then one spends an exponential amount of time $X$ in the 0-state before jumping with probability 1 to the other state. Once in the 1-state, two independent clocks are started: the gamma distributed service time clock of the packet being served and the rate $\lambda$ memoryless clock of the interarrival time between the current packet and the next one to be generated. We jump back to the 0-state if the service time clock happens to tick before that of the interarrival time. Given that the interarrival times between packets are i.i.d as well as the service time of each packet, then the probability to jump from the 1-state to the 0-state does not depend on the index of the current packet. Hence, the jump from the 1-state to the 0-state occurs with probability $p = \mathbb{P}(S<X)$, where $S$ is a generic gamma distributed service time and $X$ is a generic rate $\lambda$ memoryless interarrival time which is independent of $S$. On the other hand, if the interarrival time clock happens to tick before the service time clock then the current packet being served is preempted and the new generated packet takes its place in the queue. Therefore, we stay in the 1-state and the two clocks are started anew independently from before. This explains the $1-p$ probability seen in Figure~\ref{fig2} for staying in the 1-state.

Given that the probability $p$ will be useful in the computation of the average age as well as the average peak age, we start by deriving its expression here:
\begin{align}
 \label{eq10}
 %p = \mathbb{P}(S<X) &= \int_0^\infty\!f_S(s)\mathbb{P}(X>s)\ \mathrm{d}s\nonumber\\
		     %&= \int_0^\infty\! \frac{s^{k-1}}{\theta^k\Gamma(k)}e^{-s/\theta}e^{-\lambda s}\ \mathrm{d}s\nonumber\\
		     %&= \left(\frac{1}{1+\lambda\theta}\right)^k.
 p = \mathbb{P}(S<X) &=\left(\frac{1}{1+\lambda\theta}\right)^k.
\end{align}
Now we are ready to derive the two age metrics.

\subsection{Average age}
We start by deriving the expression for the average age. We need to compute two quantities for this purpose: $\mathbb{E}(Q_i)$ and the effective rate $\lambda_e$. 

\subsubsection{Computing $\mathbb{E}(Q_i)$}
Using \eqref{eq6}, we obtain:
\begin{align}
\label{eq13}
 \mathbb{E}(Q_i) &= \mathbb{E}\left(T_{i-1}Y_{i-1}\right)+\mathbb{E}\left(\frac{Y_{i-1}^2}{2}\right)\nonumber\\
		 &= \mathbb{E}\left(T_{i-1}\right)\mathbb{E}\left(Y_{i-1}\right)+\mathbb{E}\left(\frac{Y_{i-1}^2}{2}\right).%\nonumber\\
		 %&= \mathbb{E}\left(T\right)\mathbb{E}\left(Y\right)+\mathbb{E}\left(\frac{Y^2}{2}\right).
\end{align}
The second equality comes from the fact that $T_i$ and $Y_i$ are independent (since the interarrival time is exponential and hence memoryless). In fact, the $i^{th}$ successful packet leaves the queue empty and hence $Y_i=\hat{X}_i+Z_i$ where $\hat{X}_i=X_i-T_i$ is the remaining of the interarrival time (between the departure of the $i^{th}$ successful packet and the arrival of the next generated one) and $Z_i$ is the time for a new packet to be successfully delivered. $Z_i$ does not overlap with $T_i$ and thus is independent from it. As for $\hat{X}_i$, we also get that it is independent of $T_i$. To prove this we notice that for a successfully received packet $i$ the joint distribution $f_{X_i,T_i}(x,t)$ can be written as
\begin{equation}
 \label{eq11}
 f_{X_i,T_i}(x,t) = \left\{\begin{array}{ll}
                       0 & \text{if\ }x<t\\
                       \frac{f_{X,S}(x,t)}{\mathbb{P}(S<X)} & \text{if\ }x>t
                      \end{array}\right.,
\end{equation}
where $X$ and $S$ are the generic independent interarrival time and service time respectively. Now, using a change of variable we get
\begin{align}
 \label{eq12}
 f_{\hat{X}_i,T_i}(\hat{x},t) &= f_{X_i-T_i,T_i}(\hat{x},t)= f_{X_i,T_i}(\hat{x}+t,t)\nonumber\\
			      &= \left\{\begin{array}{ll}
                       0 & \text{if\ }x<0\\
                       \frac{f_{X,S}(\hat{x}+t,t)}{\mathbb{P}(S<X)} & \text{if\ }x>0
                      \end{array}\right.\nonumber\\
			     % &= \left\{\begin{array}{ll}
			      %0 & \text{if\ }x<0\\
			      %\frac{\lambda e^{-\lambda(\hat{x}+t)}f_S(t)}{\mathbb{P}(S<X)} & \text{if\ }x>0
			      %\end{array}\right.\nonumber\\
			      &= \left\{\begin{array}{ll}
                       0 & \text{if\ }x<0\\
                       h(\hat{x})g(t) & \text{if\ }x>0
                      \end{array}\right..
\end{align}
\eqref{eq12} shows that $\hat{X}_i$ and $T_i$ are indeed independent. Moreover, one can show that $\hat{X_i}$ is exponential with rate $\lambda$. Given that $\hat{X}_i$ and $Z_i$ are both independent from $T_i$, then $Y_i$ and $T_i$ are also independent.

From now on we will drop the subscript index since at steady state $T_{i-1}$ and $T_i$ have same the distribution, which is also the case for $Y_{i-1}$ and $Y_i$.
The following lemma will be used to evaluate \eqref{eq13}:
\begin{lemma}
\label{lemma1}
 Let $G$ be gamma distributed with parameters ($k$,$\theta$) and $F$ be a rate $\lambda$ exponential random variable independent of $G$. Then, conditioned on the event $\{G<F\}$, the distribution of $G$ becomes gamma with parameters $\left(k,\frac{\theta}{1+\lambda\theta}\right)$.%, i.e.,
 \begin{equation}
  \label{eq14}
  f_{G/G<F}(t) = \frac{t^{k-1}e^{-t\left(\frac{1+\lambda\theta}{\theta}\right)}}{\left(\frac{\theta}{1+\lambda\theta}\right)^k\Gamma(k)}.
 \end{equation}
\end{lemma}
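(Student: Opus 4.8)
The plan is to compute the conditional density directly from first principles using the independence of $G$ and $F$. Since $G$ is a continuous random variable, for a set $A$ the conditional law satisfies $\mathbb{P}(G \in A \mid G < F) = \mathbb{P}(G \in A,\ G < F) / \mathbb{P}(G < F)$. First I would write $\mathbb{P}(G \in A,\ G<F) = \int_A f_G(t)\,\mathbb{P}(F > t)\,\mathrm{d}t$, which is where independence enters: conditioning on $G=t$, the event $\{G<F\}$ becomes $\{F>t\}$ and has probability $e^{-\lambda t}$ because $F$ is rate $\lambda$ exponential. Reading off the integrand, the conditional density is
\begin{equation*}
 f_{G/G<F}(t) = \frac{f_G(t)\, e^{-\lambda t}}{\mathbb{P}(G<F)}.
\end{equation*}

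Next I would substitute the known pieces: $f_G(t) = \dfrac{t^{k-1} e^{-t/\theta}}{\theta^k \Gamma(k)}$ from Definition~\ref{def1}, and the normalizing constant $\mathbb{P}(G<F) = \left(\frac{1}{1+\lambda\theta}\right)^k$, which is exactly the quantity $p = \mathbb{P}(S<X)$ already evaluated in \eqref{eq10} (with $S \leftrightarrow G$, $X \leftrightarrow F$). Combining the two exponentials gives $e^{-t/\theta}e^{-\lambda t} = e^{-t(1+\lambda\theta)/\theta}$, and multiplying by $(1+\lambda\theta)^k$ converts $\theta^k$ in the denominator into $\left(\frac{\theta}{1+\lambda\theta}\right)^k$, yielding precisely \eqref{eq14}. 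Recognizing this as the density of a $\Gamma\!\left(k, \frac{\theta}{1+\lambda\theta}\right)$ random variable finishes the argument.

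There is no real obstacle here: the only points requiring care are (i) justifying the factorization $\mathbb{P}(G \in A,\ G<F) = \int_A f_G(t)\, e^{-\lambda t}\,\mathrm{d}t$ via independence and the exponential tail, and (ii) invoking \eqref{eq10} for the denominator so the computation is self-contained. Everything else is algebraic rearrangement of the gamma density, so I would keep that part brief. As a sanity check I would note that integrating \eqref{eq14} over $t \in (0,\infty)$ returns $1$, consistent with it being a bona fide probability density, and that $k=1$ recovers the known fact that an exponential conditioned to be the smaller of two independent exponentials is again exponential with the summed rate.
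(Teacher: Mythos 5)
Your proposal is correct and follows essentially the same route as the paper: both reduce to the identity $f_{G\mid G<F}(t) = f_G(t)\,e^{-\lambda t}/\mathbb{P}(G<F)$ (the paper obtains it via an $\epsilon$-limit and Bayes' rule, you by integrating over $\{G\in A,\ G<F\}$ using independence), then substitute the gamma density and $p$ from \eqref{eq10} and simplify. No gaps.
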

\begin{proof}
 In order to prove this Lemma we will compute the probability density function $f_{G|G<F}$:
 \begin{align*}
  f_{G|G<F}(t) &= \lim_{\epsilon\to 0} \frac{\mathbb{P}(t\leq G<t+\epsilon|G<F)}{\epsilon}\\
	       &\overset{(a)}{=} \lim_{\epsilon\to 0} \frac{\mathbb{P}(t\leq G<t+\epsilon)\mathbb{P}(G<F|t<G<t+\epsilon)}{\epsilon\mathbb{P}(G<F)}\\
	       &= f_G(t)\frac{\mathbb{P}(F>t)}{\mathbb{P}(G<F)}\\
	       &\overset{(b)}{=} f_G(t)\frac{e^{-t\lambda}}{p}\\
	       &= \frac{t^{k-1}e^{-\frac{t}{\theta}}e^{-t\lambda}}{\theta^k\Gamma(k)\left(\frac{1}{1+\lambda\theta}\right)^k}\\
	       &= \frac{t^{k-1}e^{-t\frac{1+\lambda\theta}{\theta}}}{\left(\frac{\theta}{1+\lambda\theta}\right)^k\Gamma(k)}
 \end{align*}
 where $(a)$ is obtained by applying Bayes rule and in $(b)$, $p$ is given by \eqref{eq10}.  
\end{proof}

\begin{figure}[!t]
 \centering
 \includegraphics[scale=0.3]{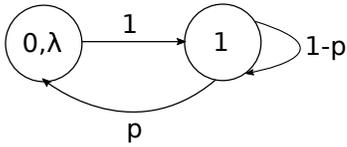}
 \caption{Semi-Markov chain representing the queue for LCFS with preemption}
 \label{fig2}
\end{figure}

In order to apply Lemma~\ref{lemma1}, we first notice that for a given packet $i$, the event $\{S_i<X_i\}$ is equivalent to the event \{packet $i$ was successfully received\}. Hence the probability $P=\mathbb{P}(S_i<\alpha|S_i<X_i)$ is the probability that the service time of the $i^{th}$ packet is less than $\alpha$ given that this packet was successfully transmitted. However, since the service times and interarrival times are i.i.d then $P$ does not depend on the index $i$. Now since $T$ is the service time of a successful packet then this leads us to
\begin{equation}
 \label{eq15}
 \mathbb{P}(T<\alpha) = \mathbb{P}(S_i<\alpha|S_i<X_i) = \mathbb{P}(S<\alpha|S<X),
\end{equation}
where $S$ and $X$ are the generic service and interarrival time respectively. By replacing $G$ by $S$ and $F$ by $X$ in Lemma~\ref{lemma1}, we deduce that the system time $T$ is gamma distributed with parameters $\left(k,\frac{\theta}{1+\lambda\theta}\right)$. Therefore,
\begin{equation}
 \label{eq16}
 \mathbb{E}(T) = \frac{k\theta}{1+\lambda\theta}.
\end{equation}
Now we turn our attention to the distribution of $Y$ for which we compute its moment generating function. Before going further in our analysis, we state the following lemma.
\begin{lemma}
\label{lemma2}
Let $G$ be gamma distributed with parameters ($k$,$\theta$) and $F$ be a rate $\lambda$ exponential random variable independent of $G$. If $F'$ is a random variable such that $$\mathbb{P}(F'<\alpha)=\mathbb{P}(F<\alpha|F<G),$$ then the moment generating function of $F'$ is given by
\begin{equation}
  \label{eq18}
  \phi_{F'}(s) = \frac{1}{1-p}\left(\frac{\lambda}{\lambda-s}-\frac{\lambda}{\lambda-s}\frac{1}{(1+\theta(\lambda-s))^k}\right),
\end{equation}
where $p=\left(\frac{1}{1+\lambda\theta}\right)^k$.
\end{lemma}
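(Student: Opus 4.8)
The plan is to reduce everything to one Laplace-type integral. First I would write down the density of $F'$ by the same conditioning argument used in the proof of Lemma~\ref{lemma1}. Since $\mathbb{P}(F'<\alpha)=\mathbb{P}(F<\alpha\mid F<G)$, applying Bayes' rule as in step $(a)$ there (the roles of $G,F$ here matching those of $S,X$) gives
\begin{equation*}
 f_{F'}(t)=\frac{f_F(t)\,\mathbb{P}(G>t)}{\mathbb{P}(F<G)}=\frac{\lambda e^{-\lambda t}\,\mathbb{P}(G>t)}{1-p},\qquad t>0,
\end{equation*}
where I used $\mathbb{P}(F<G)=1-\mathbb{P}(G<F)=1-p$ with $p=\bigl(\tfrac{1}{1+\lambda\theta}\bigr)^k$ from \eqref{eq10}.

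Next I would compute the moment generating function straight from this density and pull out the constant:
\begin{equation*}
 \phi_{F'}(s)=\int_0^\infty e^{st}f_{F'}(t)\,\mathrm{d}t=\frac{\lambda}{1-p}\int_0^\infty e^{-(\lambda-s)t}\,\mathbb{P}(G>t)\,\mathrm{d}t,
\end{equation*}
the integral converging precisely for $s<\lambda$, hence on a neighbourhood of $0$, so this does define the MGF.

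It then remains to evaluate the tail integral $J(r)=\int_0^\infty e^{-rt}\,\mathbb{P}(G>t)\,\mathrm{d}t$ for $r=\lambda-s>0$. Swapping the order of integration (everything is nonnegative, Fubini applies) and integrating the inner exponential gives $J(r)=\int_0^\infty f_G(u)\,\frac{1-e^{-ru}}{r}\,\mathrm{d}u=\frac{1-\mathbb{E}(e^{-rG})}{r}$. Because $G\sim\Gamma(k,\theta)$ has Laplace transform $\mathbb{E}(e^{-rG})=(1+\theta r)^{-k}$ — a one-line computation from Definition~\ref{def1}, valid for every $k>0$ via the substitution $v=(r+\tfrac1\theta)u$ and the definition of $\Gamma(k)$ — substituting $r=\lambda-s$ yields $J(\lambda-s)=\dfrac{1-(1+\theta(\lambda-s))^{-k}}{\lambda-s}$. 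Plugging this back gives $\phi_{F'}(s)=\dfrac{\lambda}{1-p}\cdot\dfrac{1-(1+\theta(\lambda-s))^{-k}}{\lambda-s}$, which is exactly \eqref{eq18} after distributing.

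I do not expect a real obstacle: the only points needing care are the convergence range $s<\lambda$ of the defining integral and the justification of the gamma Laplace transform for non-integer $k$, both routine. The conditioning step is essentially the computation already carried out for Lemma~\ref{lemma1}, with the inequality reversed, so the ``hard part'' is merely the bookkeeping in the final simplification.
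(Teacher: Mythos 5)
Your proposal is correct and follows essentially the same route as the paper: derive the conditional density $f_{F'}(t)=\frac{\lambda e^{-\lambda t}\mathbb{P}(G>t)}{1-p}$ by the Bayes-rule argument of Lemma~\ref{lemma1} and then integrate it against $e^{st}$. The only (immaterial) difference is at the last step, where you evaluate $\int_0^\infty e^{-(\lambda-s)t}\mathbb{P}(G>t)\,\mathrm{d}t$ by Fubini and the gamma Laplace transform, whereas the paper uses integration by parts with $\frac{\mathrm{d}}{\mathrm{d}t}\mathbb{P}(G<t)=f_G(t)$; both yield \eqref{eq18}.
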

\begin{proof}
  We first start by computing the probability density function of $X'$.
  \begin{align*}
   f_{X'}(t) &= \lim_{\epsilon\to0}\frac{\mathbb{P}(t\leq F<t+\epsilon|F<G)}{\epsilon}\\
	     &= \lim_{\epsilon\to0}\frac{\mathbb{P}(t\leq F<t+\epsilon)\mathbb{P}(F<G|t\leq F<t+\epsilon)}{\epsilon\mathbb{P}(F<G)}\\
	     &= \frac{\lambda e^{-t\lambda}\mathbb{P}(G>t)}{1-p},
  \end{align*}
  where $p=\left(\frac{1}{1+\lambda\theta}\right)^k$.
  
  So now we can calculate the moment generating function of $F'$.
  \begin{align*}
   \phi_{F'}(s) &= \int_0^\infty f_{F'}(t)e^{st}\mathrm{d}t\\
		&= \int_0^\infty \frac{1}{1-p}\lambda e^{-t\lambda}\mathbb{P}(G>t)e^{st}\mathrm{d}t\\
		&= \frac{1}{1-p}\left(\frac{\lambda}{\lambda-s}-\lambda\int_0^\infty \mathbb{P}(G<t)e^{st}\mathrm{d}t\right)
  \end{align*}
  Using integration by parts and the fact that $\frac{\mathrm{d}}{\mathrm{dt}}\mathbb{P}(G<t)=f_G(t)=\frac{t^{k-1}e^{-\frac{t}{\theta}}}{\theta^k\Gamma(k)}$, we get
  \begin{equation*}
   \phi_{F'}(s) = \frac{1}{1-p}\left(\frac{\lambda}{\lambda-s} - \frac{\lambda}{(\lambda-s)(1+\theta(\lambda-s))^k}\right)
  \end{equation*}

\end{proof}

\begin{lemma}
The moment generating function of $Y$ is given by:
\begin{align}
 \label{eq20}
 \phi_Y(s) %&= \mathbb{E}\left(e^{s(X+W)}\right)\nonumber\\
	   %&= \phi_X(s)\phi_W(s)\nonumber\\
	   &= \frac{\lambda}{\lambda-s\left(1+\theta(\lambda-s)\right)^k}.
\end{align}
\end{lemma}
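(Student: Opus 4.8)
The plan is to reduce everything to a moment generating function computation for the ``successful delivery time'' $Z$, using the decomposition $Y=\hat X+Z$ already established in the excerpt. Since $\hat X$ is a rate-$\lambda$ exponential independent of $Z$, we have $\phi_Y(s)=\phi_{\hat X}(s)\,\phi_Z(s)=\frac{\lambda}{\lambda-s}\,\phi_Z(s)$, so it suffices to find $\phi_Z(s)$ and then cancel a factor $(\lambda-s)$ at the end.

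To compute $\phi_Z(s)$ I would exploit the renewal structure of the semi-Markov chain of Figure~\ref{fig2}. Each sojourn in the $1$-state is one service attempt: a generic $S\sim\Gamma(k,\theta)$ races an independent rate-$\lambda$ interarrival $X$; with probability $p=\mathbb{P}(S<X)$ the packet is delivered and the chain returns to state $0$, and with probability $1-p$ it is preempted and a statistically identical attempt begins afresh (both clocks restart). Hence $Z=\sum_{j=1}^{M-1}F'_j+T$, where $M$ is geometric with parameter $p$ (so $\mathbb{P}(M=m)=(1-p)^{m-1}p$ for $m\ge1$), the $F'_j$ are i.i.d.\ copies of the preempted interarrival time (i.e.\ $X$ conditioned on $\{X<S\}$, the variable $F'$ of Lemma~\ref{lemma2} with $F=X$, $G=S$), $T$ is the successful service time (i.e.\ $S$ conditioned on $\{S<X\}$, the variable of Lemma~\ref{lemma1}), and $M$, the $F'_j$ and $T$ are mutually independent. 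Conditioning on $M$ and summing the resulting geometric series (which converges for $s$ in a neighbourhood of $0$, where all the MGFs involved exist) gives
\[
 \phi_Z(s)=\frac{p\,\phi_T(s)}{1-(1-p)\,\phi_{F'}(s)} .
\]

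The last step is substitution and simplification. From Lemma~\ref{lemma1}, $T\sim\Gamma\!\big(k,\tfrac{\theta}{1+\lambda\theta}\big)$, so $\phi_T(s)=\big(1-\tfrac{\theta s}{1+\lambda\theta}\big)^{-k}=\big(\tfrac{1+\lambda\theta}{1+\theta(\lambda-s)}\big)^{k}$, and since $p=(1+\lambda\theta)^{-k}$ this collapses to the clean identity $p\,\phi_T(s)=\big(1+\theta(\lambda-s)\big)^{-k}$. From Lemma~\ref{lemma2}, $(1-p)\,\phi_{F'}(s)=\tfrac{\lambda}{\lambda-s}\big(1-(1+\theta(\lambda-s))^{-k}\big)$. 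Plugging both into the display, multiplying numerator and denominator by $(\lambda-s)\big(1+\theta(\lambda-s)\big)^{k}$, and observing that the $\lambda\big(1+\theta(\lambda-s)\big)^{k}$ terms cancel, one obtains $\phi_Z(s)=\frac{\lambda-s}{\lambda-s(1+\theta(\lambda-s))^{k}}$; multiplying by $\frac{\lambda}{\lambda-s}$ cancels the $(\lambda-s)$ and yields exactly $\phi_Y(s)=\frac{\lambda}{\lambda-s(1+\theta(\lambda-s))^{k}}$.

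I expect the genuine content to be the modeling step rather than the algebra: one must argue carefully from the semi-Markov description that $Z$ has precisely this compound-geometric form, with true mutual independence of the number of preemptions, the preempted interarrival times, and the final conditioned service time, and one must keep track of the region of $s$ on which the geometric series and all MGFs converge so the manipulations are legitimate. Everything after that is bookkeeping. (An equivalent route that avoids writing the geometric sum explicitly is to set up the one-step renewal equation $\phi_Z(s)=\mathbb{E}\big(e^{sS}\mathbf{1}_{\{S<X\}}\big)+\mathbb{E}\big(e^{sX}\mathbf{1}_{\{X<S\}}\big)\,\phi_Z(s)$ directly from Figure~\ref{fig2} and solve for $\phi_Z(s)$; the two bracketed quantities are again $p\,\phi_T(s)$ and $(1-p)\,\phi_{F'}(s)$.)
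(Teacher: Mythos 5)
Your proof is correct and takes essentially the same route as the paper: the paper also decomposes $Y$ into an exponential interarrival time plus a compound-geometric sum of preempted interarrival times followed by the conditioned service time (its $W$ is your $Z$), applies Lemmas~\ref{lemma1} and~\ref{lemma2}, sums the geometric series, and multiplies by $\frac{\lambda}{\lambda-s}$. Your closed-form expression $\phi_Z(s)=\frac{p\,\phi_T(s)}{1-(1-p)\phi_{F'}(s)}$ and the ensuing simplification match the paper's computation exactly, so nothing further is needed.
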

\begin{proof}
  By observing Figure~\ref{fig2} we notice that $Y$ is the smallest time needed to go from the 0-state back to the 0-state. Hence $Y$ can be written as $Y=X+W$ where $X$ is the generic interarrival time and $W$ is the time spent in the 1-state before the first jump back to the 0-state. So $W$ can be written as:
  \begin{align}
  \label{eq17}
  W &= \left\{\begin{array}{ll}
		S' & \text{with probability\ } p\\
		X'_1+S' & \text{with probability\ } (1-p)p\\
		X'_1+X'_2+S' & \text{with probability\ } (1-p)^2p\\
		\vdots
	      \end{array}\right.\nonumber\\
      &= \sum_{j=0}^M X'_j+S',
  \end{align}
  where $X'_j$ is such that $\mathbb{P}(X'_j<\alpha)=\mathbb{P}(X<\alpha|X<S)$, $S'$ is such that $\mathbb{P}(S'<\alpha)=\mathbb{P}(S<\alpha|S<X)$ and $M$ is a geometric$(p)$ random variable which is independent of $X'_j$ and $S'$, and which gives the number of discarded packets before the first successful reception. 
  Applying Lemmas~\ref{lemma1} and ~\ref{lemma2} on $S'$ and $X'$ respectively and using the fact that $M$,$S'$ and $X'_j$ are all mutually independent, it follows that
  \begin{align}
  \label{eq19}
  \phi_W(s) %&= \mathbb{E}\left(e^{s(\sum_{j=0}^M X'_j + S')}\right)\nonumber\\
	    &= \mathbb{E}\left(e^{s\sum_{j=0}^M X'_j}\right)\phi_{S'}(s)\nonumber\\
	    &= \mathbb{E}\left(\phi_{X'}(s)^M\right)\left(\frac{1+\lambda\theta}{1+\theta(\lambda-s)}\right)^k\nonumber\\
	    &= \sum_{j=0}^\infty \phi_{X'}(s)^jp(1-p)^j\left(\frac{1+\lambda\theta}{1+\theta(\lambda-s)}\right)^k\nonumber\\
	    &= \frac{\lambda-s}{\lambda-s\left(1+\theta(\lambda-s)\right)^k}.
  \end{align}
  Moreover, since $X$ and $W$ are independent and $\phi_X(s)=\frac{\lambda}{\lambda-s}$, we get using \eqref{eq19}
  \begin{align*}
  \phi_Y(s) %&= \mathbb{E}\left(e^{s(X+W)}\right)\nonumber\\
	    &= \phi_X(s)\phi_W(s)= \frac{\lambda}{\lambda-s\left(1+\theta(\lambda-s)\right)^k}.
  \end{align*}
\end{proof}

Now that we have found $\phi_Y$ we can compute the first two moments of $Y$ as $\mathbb{E}(Y) = \frac{(1+\lambda\theta)^k}{\lambda}$ and $\mathbb{E}(Y^2) = \frac{2(1+\lambda\theta)^{k-1}}{\lambda^2}\left((1+\lambda\theta)^{k+1}-k\theta\lambda\right)$.
%\begin{equation}
 %\label{eq21}
 %\mathbb{E}(Y) = \phi_Y'(0)=\frac{(1+\lambda\theta)^k}{\lambda}.
%\end{equation}
%Deriving \eqref{eq20} twice and setting $s=0$ we get
%\begin{equation}
 %\label{eq22}
 %\mathbb{E}(Y^2) = \phi_Y''(0)= \frac{2(1+\lambda\theta)^{k-1}}{\lambda^2}\left((1+\lambda\theta)^{k+1}-k\theta\lambda\right).
%\end{equation}
Combining these results with \eqref{eq16}, we obtain,
\begin{equation}
 \label{eq23}
 \mathbb{E}(Q_i) = \frac{(1+\lambda\theta)^{2k}}{\lambda^2}.
\end{equation}

\subsubsection{Computing the effective rate}
Using \eqref{eq4} we get
\begin{align}
 \label{eq24}
 \lambda_e %&= \lambda\mathbb{P}\left(\text{packet is received successfully}\right)\nonumber\\
           &= \lambda p = \lambda\left(\frac{1}{1+\lambda\theta}\right)^k.
\end{align}

% Therefore, to get the average age we combine \eqref{eq23} and ~\eqref{eq24},
% \begin{equation}
%  \label{eq25}
%  \Delta = \lambda_e\mathbb{E}(Q_i) = \frac{(1+\lambda\theta)^k}{\lambda}.
% \end{equation}
Now we are ready to compute the average age:
We conclude 
\begin{proposition}
The average age in the LCFS with preemption scheme assuming $\Gamma(k,\theta)$ service time is given by:
\begin{equation}
 \label{eq25}
 \Delta = \lambda_e\mathbb{E}(Q_i) = \frac{(1+\lambda\theta)^k}{\lambda}.
\end{equation}
\end{proposition}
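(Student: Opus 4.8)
The plan is to obtain the claimed formula by direct substitution, once the two ingredients of \eqref{eq2} have been assembled. Recall that \eqref{eq2} gives $\Delta = \lambda_e\,\mathbb{E}(Q_i)$, so it suffices to have closed forms for $\mathbb{E}(Q_i)$ and for the effective rate $\lambda_e$; both were produced above, namely \eqref{eq23} gives $\mathbb{E}(Q_i) = (1+\lambda\theta)^{2k}/\lambda^2$ and \eqref{eq24} gives $\lambda_e = \lambda(1+\lambda\theta)^{-k}$. The proposition is then essentially a one-line multiplication, and the real content of the argument is in verifying those two intermediate identities.

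First I would justify \eqref{eq23}. Starting from \eqref{eq13}, $\mathbb{E}(Q_i) = \mathbb{E}(T)\mathbb{E}(Y) + \tfrac12\mathbb{E}(Y^2)$, where the product form is legitimate because $T$ and $Y$ are independent, as established via \eqref{eq11}--\eqref{eq12}. By Lemma~\ref{lemma1} applied with $G=S$ and $F=X$, the system time $T$ is $\Gamma\!\left(k,\tfrac{\theta}{1+\lambda\theta}\right)$, hence $\mathbb{E}(T) = \tfrac{k\theta}{1+\lambda\theta}$ as in \eqref{eq16}. The moments of $Y$ come from differentiating $\phi_Y$ of \eqref{eq20} at $s=0$, giving $\mathbb{E}(Y) = (1+\lambda\theta)^k/\lambda$ and $\mathbb{E}(Y^2) = \tfrac{2(1+\lambda\theta)^{k-1}}{\lambda^2}\bigl((1+\lambda\theta)^{k+1} - k\theta\lambda\bigr)$. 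Substituting into \eqref{eq13}, the contribution $\mathbb{E}(T)\mathbb{E}(Y) = k\theta(1+\lambda\theta)^{k-1}/\lambda$ exactly cancels the corresponding term in $\tfrac12\mathbb{E}(Y^2)$, leaving $(1+\lambda\theta)^{2k}/\lambda^2$, which is \eqref{eq23}. Then \eqref{eq24} follows from \eqref{eq4}: in the preemptive LCFS chain a packet is received successfully precisely when its service clock ticks before the next arrival, an event of probability $p = \mathbb{P}(S<X) = (1+\lambda\theta)^{-k}$ by \eqref{eq10}, so $\lambda_e = \lambda p$.

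Finally, combining the two ingredients,
\begin{equation*}
 \Delta = \lambda_e\,\mathbb{E}(Q_i) = \lambda\left(\frac{1}{1+\lambda\theta}\right)^{\!k}\cdot\frac{(1+\lambda\theta)^{2k}}{\lambda^2} = \frac{(1+\lambda\theta)^{k}}{\lambda},
\end{equation*}
which is the asserted identity.

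I do not expect any genuine obstacle: the substance has already been carried out in Lemmas~\ref{lemma1}--\ref{lemma2} and in the moment-generating-function computation for $Y$. The only step demanding care is the bookkeeping in \eqref{eq13} --- reading the first two moments off $\phi_Y$ correctly and checking that the two $k\theta(1+\lambda\theta)^{k-1}/\lambda$ terms indeed cancel --- but this is routine algebra rather than a conceptual difficulty.
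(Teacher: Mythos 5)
Your proposal is correct and follows the paper's own route: the paper proves the proposition exactly by multiplying \eqref{eq23} with \eqref{eq24}, and your verification of those two ingredients (independence of $T$ and $Y$, Lemma~\ref{lemma1} for $\mathbb{E}(T)$, moments of $Y$ from $\phi_Y$, and the cancellation of the $k\theta(1+\lambda\theta)^{k-1}/\lambda$ terms) matches the derivation given in the text preceding the proposition.
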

\begin{proof}		
Using \eqref{eq23} and ~\eqref{eq24}.
\end{proof}

\subsection{Average peak age}
%Using \eqref{eq9},~\eqref{eq16} and ~\eqref{eq21}
\begin{proposition}
The average peak age in the LCFS with preemption scheme assuming $\Gamma(k,\theta)$ service time is given by:
\begin{align}
 \label{eq26}
 %\mathbb{E}(P_k) &= \mathbb{E}(T_{k-1}) + \mathbb{E}(Y_{k-1})\nonumber\\
%		 &= \mathbb{E}(T) + \mathbb{E}(Y)\nonumber\\
%		 &= \frac{k\theta}{1+\lambda\theta}+\frac{(1+\lambda\theta)^k}{\lambda}.
 \mathbb{E}(P_i) = \mathbb{E}(T) + \mathbb{E}(Y)= \frac{k\theta}{1+\lambda\theta}+\frac{(1+\lambda\theta)^k}{\lambda}.
\end{align}
\end{proposition}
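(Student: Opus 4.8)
The statement is essentially a bookkeeping exercise that assembles facts already established for the LCFS-with-preemption model, so the plan is simply to collect them in the right order. The starting point is equation~\eqref{eq9}, $\mathbb{E}(P_i) = \mathbb{E}(T_{i-1}) + \mathbb{E}(Y_{i-1})$, which itself comes from reading the identity $P_i = T_{i-1} + Y_{i-1}$ off Figure~\ref{fig1a}: the peak just before the $i^{\text{th}}$ successful reception is reached by letting the age rise linearly over the whole interdeparture interval $Y_{i-1}$ starting from the value $T_{i-1}$ it had right after the $(i-1)^{\text{th}}$ reception. Since at steady state $T_{i-1}$ and $T_i$ are identically distributed, and likewise $Y_{i-1}$ and $Y_i$, I would drop the indices and write $\mathbb{E}(P_i) = \mathbb{E}(T) + \mathbb{E}(Y)$.

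The second step is to insert $\mathbb{E}(T)$. By Lemma~\ref{lemma1} applied with $G = S$ and $F = X$ (as explained around~\eqref{eq15}), the system time $T$ of a successful packet is $\Gamma\!\left(k,\frac{\theta}{1+\lambda\theta}\right)$, so its mean is $k$ times the scale parameter, giving $\mathbb{E}(T) = \frac{k\theta}{1+\lambda\theta}$, which is exactly~\eqref{eq16}.

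The third step is to insert $\mathbb{E}(Y)$, obtained from the moment generating function $\phi_Y(s) = \dfrac{\lambda}{\lambda - s\left(1+\theta(\lambda-s)\right)^k}$ derived in the preceding lemma. Writing $\phi_Y(s) = \lambda / D(s)$ with $D(s) = \lambda - s(1+\theta(\lambda-s))^k$, one has $\phi_Y'(s) = -\lambda D'(s)/D(s)^2$, and since $D(0) = \lambda$ and $D'(0) = -(1+\lambda\theta)^k$, evaluating at $s=0$ gives $\mathbb{E}(Y) = \phi_Y'(0) = \frac{(1+\lambda\theta)^k}{\lambda}$, the value already recorded in the text just after~\eqref{eq20}. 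Adding the two contributions yields $\mathbb{E}(P_i) = \frac{k\theta}{1+\lambda\theta} + \frac{(1+\lambda\theta)^k}{\lambda}$, as claimed.

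There is no real obstacle here: every ingredient (the decomposition~\eqref{eq9}, the law of $T$, and the MGF of $Y$) is already in hand, and the only computation is the routine differentiation of $\phi_Y$ at the origin. If anything, the one point worth a sentence of care is the justification for passing from per-packet quantities to steady-state ones, i.e.\ that the distributions of $T_i$ and $Y_i$ stabilize — but this is precisely the remark made when the indices were dropped in the average-age derivation, so it can simply be invoked.
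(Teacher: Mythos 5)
Your proposal is correct and follows exactly the paper's own (one-line) proof: apply \eqref{eq9}, substitute $\mathbb{E}(T)$ from \eqref{eq16}, and substitute $\mathbb{E}(Y)$ obtained from the moment generating function \eqref{eq20}. The explicit differentiation of $\phi_Y$ at $s=0$ is a correct verification of the value of $\mathbb{E}(Y)$ that the paper simply quotes.
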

\begin{proof}
Using \eqref{eq9}, ~\eqref{eq16} and the value of $\mathbb{E}(Y)$.
\end{proof}

%-------------------------------------------------------------------------
% Without preemption
%----------------------------------------------------------------------------
\section{Age of information for LCFS without preemption}
\label{nopreempt}
Another interesting scheme worth to study is the LCFS without preemption. In this scenario, we assume that the queue has a buffer of size 1 and we wait for the packet being served to finish before serving a new one. If while serving a packet a new update arrives, it replaces any packet waiting in the buffer. In this section we will derive a closed form expression for the average age $\Delta$  and the average peak age $\mathbb{E}(P_k)$ for LCFS without preemption while assuming an Erlang distribution for the service time with parameter $(k,\theta)$. An Erlang distribution is nothing but a special case of the gamma distribution where $k\in\mathbb{N}$. Moreover, an Erlang distribution $(k,\theta)$ can be seen as the sum of $k$ independent memoryless random variables $A_j$, each with rate $\frac{1}{\theta}$. Using this observation, we model the state of the queue as a two-level Markov chain as shown in Figure~\ref{fig3}.
\begin{figure}[!t]
 \centering
 \includegraphics[scale=0.15]{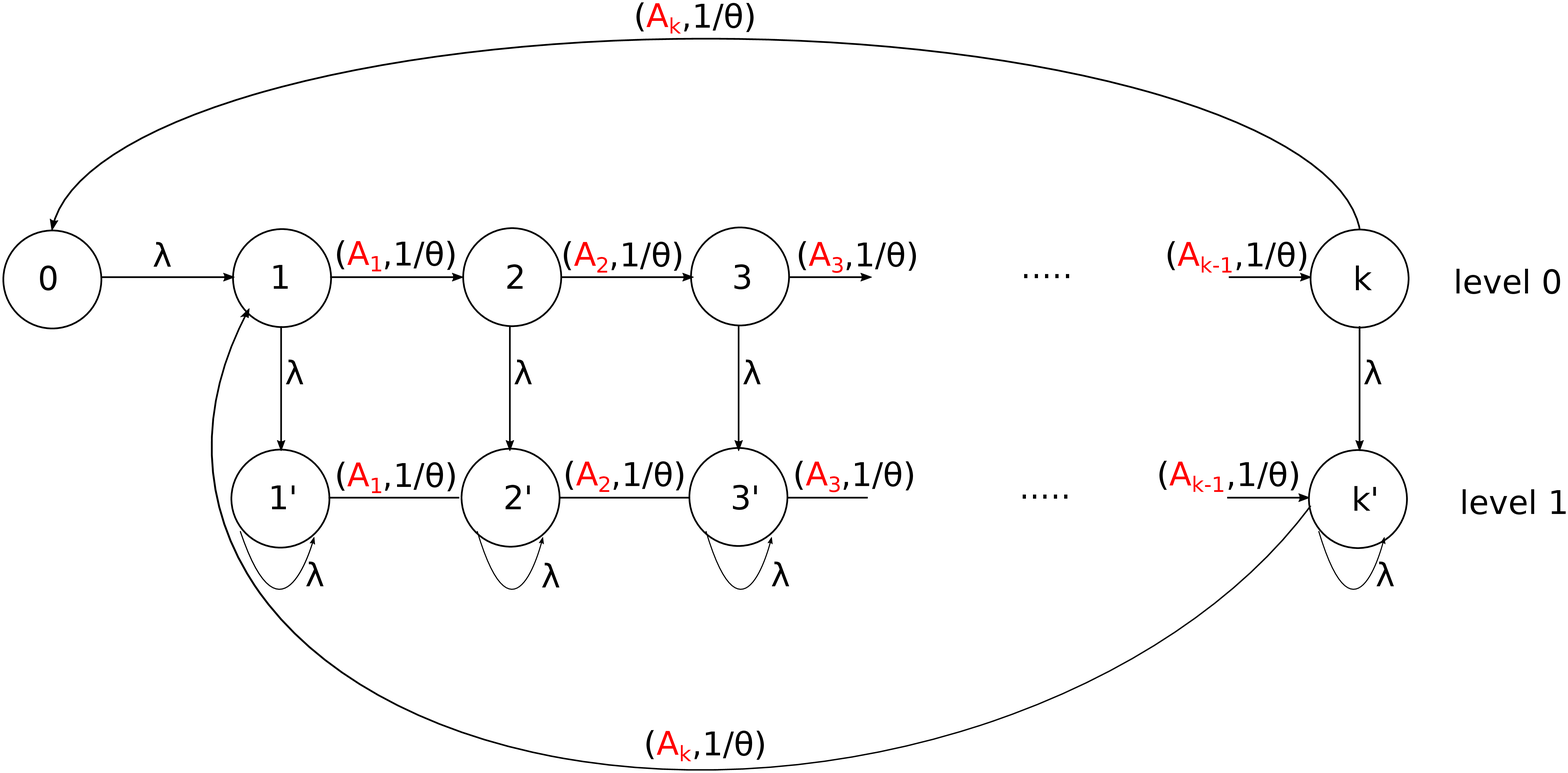}
 \caption{Markov chain representing the queue for LCFS without preemption}
 \label{fig3}
\end{figure}

As in the previous section, we will denote the generic rate-$\lambda$ interarrival time by $X$ and the generic Erlang distributed service time by $S=\sum_{j=1}^k A_j$. Using this notation, we notice that the service time can be represented as the succession of $k$ exponential-time steps that need to be accomplished for a successful reception. Hence, a packet in state $j\in \{1,\ldots,k\}$ or $j'\in \{1',\ldots,k'\}$ is a packet completing his $j^{th}$ step out of a total of $k$. Moreover, the 0-state represents an empty queue, all the states of level 0 represent an empty buffer and those of level 1 represent a full buffer. After spending an exponential amount of time in the 0-state, we can only jump to the 1-state once a new update arrives. Using the memoryless property of the exponential distribution, we can describe the evolution of this packet in the queue as follows: at state $j\in\{1,\ldots,k\}$, two exponential clocks start simultaneously. One clock -- denoted $A_j$ -- of rate $\frac{1}{\theta}$ and another one -- denoted $\Lambda_j$ -- of rate $\lambda$. If clock $A_j$ ticks first then the packet jump to state $j+1$ and the buffer stays empty. Otherwise it jumps to state $j'$ since now the buffer is full. On the other hand, if the packet is at state $j'$ and the $A_j$ clock ticks first then the packet jump to state $(j+1)'$ without updating the buffer. However, if the $\Lambda_j$ ticks first then the packet stays in state $j'$ but we update the buffer with the new arrival.

\subsection{Average age}
\begin{proposition}
The average age in the LCFS without preemption scheme assuming Erlang $E(k,\theta)$ service time is:
\begin{align}
 \label{eq37}
 \Delta &= \frac{k\theta(2+\lambda\theta+3k\lambda\theta)}{2(q^k+k\lambda\theta)} + \frac{2(1-k^2\lambda\theta)}{\lambda(1+k\lambda\theta(1+\lambda\theta)^k)}\nonumber\\
        & +\frac{k\theta(1+k\lambda\theta+2k)}{1+\lambda\theta+k\lambda\theta(1+\lambda\theta)^{k+1}}\nonumber\\%-\frac{k\theta}{(1+\lambda\theta)^{k+1}+k\lambda\theta(1+\lambda\theta)^{2k+1}}\nonumber\\
        &- \frac{1+\lambda\theta+k\lambda\theta}{\lambda(1+\lambda\theta)\left((1+\lambda\theta)^k+k\lambda\theta(1+\lambda\theta)^{2k}\right)}
\end{align}
\end{proposition}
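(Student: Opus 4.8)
The plan is to run the general machinery of Section~\ref{PC}: by (\ref{eq2}) and (\ref{eq6}), $\Delta=\lambda_e\bigl(\mathbb{E}(TY)+\tfrac12\mathbb{E}(Y^2)\bigr)$ (indices dropped at steady state), so I need $\lambda_e$ together with the first two moments of $Y$ and the mixed moment $\mathbb{E}(TY)$, where $T$ is the system time of a generic successful packet and $Y$ the interdeparture time to the next one. Unlike the preemption case, $T$ and $Y$ are \emph{not} independent here, which is the source of the extra terms. The structural fact I would extract from the chain of Figure~\ref{fig3} is that a packet that enters service always finishes it, so losses are exactly the packets overwritten in the buffer; hence a busy period consists of a geometric$(p)$ number of successful packets, with $p=\mathbb{P}(\text{no arrival during one service})=\mathbb{P}(S<X)=(1+\lambda\theta)^{-k}$ (same computation as (\ref{eq10})). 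Its first packet arrives to the empty queue (type E), all later ones are served out of the buffer (type B); a successful packet leaves the queue empty (``exit 0'') iff no arrival falls in its service, with probability $p$, and otherwise leaves one packet in the buffer (``exit 1''). A quick consequence is that, among successful packets, type E has probability $p$ (one per busy period, which contains $1/p$ successes on average).

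First I would get $Y$ by conditioning on the exit type of the departing packet. With probability $p$ the queue empties and $Y=X+S'$, where $X\sim\mathrm{Exp}(\lambda)$ is the time to the next arrival (fresh, by memorylessness) and $S'\sim E(k,\theta)$ its service time; with probability $1-p$ the buffered packet starts service at once and $Y=S'$. In both modes the summands are independent i.i.d.\ copies independent of the past, so $\mathbb{E}(Y)=\tfrac{p}{\lambda}+k\theta$ and $\mathbb{E}(Y^2)=p\,\mathbb{E}\bigl((X+S)^2\bigr)+(1-p)\,\mathbb{E}(S^2)$, evaluated from $\mathbb{E}(X)=1/\lambda$, $\mathbb{E}(X^2)=2/\lambda^2$, $\mathbb{E}(S)=k\theta$, $\Var(S)=k\theta^2$. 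For $\lambda_e$ I would count, over one cycle (idle period $+$ busy period), $1/p$ successful packets against $1+\lambda k\theta/p$ arrivals (one opens the busy period, the rest are Poisson over the $G$ back-to-back services), so by (\ref{eq4}) $\lambda_e=\lambda/(p+\lambda k\theta)$, which equals $1/\mathbb{E}(Y)$ as it should.

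The core step is $\mathbb{E}(TY)$. Decompose $T=R+S$, where $R$ is the packet's buffer waiting time ($R=0$ for type E) and $S\sim E(k,\theta)$ its service time; $R$ is independent of $S$ and of the exit type, since the exit type is a function of the service duration and of the Poisson arrivals inside that service, which by memorylessness are independent of the arrivals preceding service (hence of $R$). Conditioning the packet carrying $T$ on its exit type and using that the fresh $X,S'$ building $Y$ are independent of $T$, I get $\mathbb{E}(TY)=p\,(\tfrac1\lambda+k\theta)\,\mathbb{E}(T\mid\text{exit }0)+(1-p)\,k\theta\,\mathbb{E}(T\mid\text{exit }1)$, with $\mathbb{E}(T\mid\text{exit }j)=\mathbb{E}(R)+\mathbb{E}(S\mid\text{exit }j)$. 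Here $\mathbb{E}(S\mid\text{exit }0)=\mathbb{E}(S\mid S<X)=k\theta/(1+\lambda\theta)$ by Lemma~\ref{lemma1}, and $\mathbb{E}(S\mid\text{exit }1)=\mathbb{E}(S\mid S>X)$ follows by subtracting the exit-$0$ contribution from $\mathbb{E}(S)=k\theta$. For $\mathbb{E}(R)=(1-p)\,\mathbb{E}(R\mid\text{type B})$ I would use a time-reversal argument: a type-B packet is the \emph{last} arrival during the previous successful packet's service $\tilde S$ (whose law is $S$ conditioned on carrying at least one arrival), so its waiting time, given $\tilde S$, is an $\mathrm{Exp}(\lambda)$ variable conditioned to be at most $\tilde S$; integrating $\mathbb{E}(R\mid\tilde S=s)=\tfrac1\lambda-\tfrac{s e^{-\lambda s}}{1-e^{-\lambda s}}$ against the density $f_S(s)(1-e^{-\lambda s})/(1-p)$ collapses everything to $\mathbb{E}(S e^{-\lambda S})=k\theta(1+\lambda\theta)^{-(k+1)}$, giving $\mathbb{E}(R)=\tfrac{1-p}{\lambda}-k\theta(1+\lambda\theta)^{-(k+1)}$.

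Finally I would plug $\mathbb{E}(TY)$ and $\tfrac12\mathbb{E}(Y^2)$ into $\mathbb{E}(Q_i)$, multiply by $\lambda_e=1/\mathbb{E}(Y)=\lambda/(p+\lambda k\theta)$, substitute $p=(1+\lambda\theta)^{-k}$, and simplify to (\ref{eq37}); the byproduct $\mathbb{E}(T)=\mathbb{E}(R)+k\theta$ also drops out along the way. I expect the hard parts to be (i) the bookkeeping in the $\mathbb{E}(TY)$ step — in particular isolating the ``fresh'' part of $Y$ and justifying $R\perp S$ and $R\perp\{\text{exit type}\}$ — together with the time-reversal evaluation of $\mathbb{E}(R\mid\text{type B})$, and (ii) the final algebraic reduction, which is lengthy and is where the mixed powers $(1+\lambda\theta)^{k}$, $(1+\lambda\theta)^{k+1}$, $(1+\lambda\theta)^{2k}$ appearing in (\ref{eq37}) originate.
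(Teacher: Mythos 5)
Your proposal is correct — I checked, for instance, that your $\mathbb{E}(R)=\frac{1-p}{\lambda}-k\theta(1+\lambda\theta)^{-(k+1)}$, $\mathbb{E}(T)$, $\mathbb{E}(TY)$, $\mathbb{E}(Y^2)$ and $\lambda_e=\lambda/(p+\lambda k\theta)$ all coincide with the paper's \eqref{eq28}--\eqref{eq34}, \eqref{eq38} and \eqref{eq36} — but the key computations are done by a genuinely different route. The paper exploits the Erlang stage structure: it conditions on the finer partition $\Psi_j^{i-1}\Psi_l^{i-2}$ recording at which of the $k$ exponential stages the last arrival occurred, uses Lemma~\ref{lemma1} stage by stage to show the waiting time conditioned on $\Psi_l^{i-2}$ is $\Gamma\bigl(k-l+1,\tfrac{\theta}{1+\lambda\theta}\bigr)$ and to get $\mathbb{E}(S_{i-1}\mid\Psi_j^{i-1})$, then sums over $(j,l)$; and it obtains $\lambda_e$ from the steady state of a uniformized Markov chain. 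You instead condition only on the coarse exit type (buffer empty or not at departure), which is all that $Y$ depends on; the paper's double sum collapses to exactly your two-term formula because $\mathbb{E}(W)$ and $\mathbb{E}(S\mathbf{1}_{\{\text{exit }1\}})$ aggregate over $l$ and $j$, so nothing is lost. Your evaluation of $\mathbb{E}(R)$ by Poisson time reversal (the type-B packet is the last arrival in the previous service, so its wait is an $\mathrm{Exp}(\lambda)$ truncated at $\tilde S$, reducing everything to $\mathbb{E}(e^{-\lambda S})$ and $\mathbb{E}(Se^{-\lambda S})$), your computation of $\mathbb{E}(S\mid\text{exit }1)$ by total expectation, and your renewal-cycle (equivalently $\lambda_e=1/\mathbb{E}(Y)$) derivation of the effective rate avoid the stage bookkeeping entirely and do not require $k\in\mathbb{N}$ — indeed they would work for any service law with computable Laplace-transform moments, which is a real gain in generality and brevity. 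What the paper's finer conditioning buys in exchange is the full conditional distributions (e.g., the waiting time is itself gamma), not just the means, consistent with its Markov-chain picture. Your independence claims ($R$ independent of the current service and exit type by memorylessness; the components of $Y$ fresh given the exit type) are the right justifications, and the remaining work is only the final algebraic reduction to \eqref{eq37}, as you say.
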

\begin{proof}
  As in the previous section we need to compute the effective rate (given by \eqref{eq4}) and $\mathbb{E}(Q_i)$ (given by \eqref{eq6}).
  \subsubsection{Computing $\mathbb{E}(Q_i)$}
  Following the same line of thoughts as in Section~\ref{preempt}, we will calculate $\mathbb{E}(T_{i-1}Y_{i-1})$ by expressing it as the average of two conditionally independent variables given some set of events. For this end we define the family of events $\Psi_j^{i}=\left\{A_j^{i}>\Lambda_j^{i}; \sum_{l=j+1}^k A^{i}_l<X\right\}$, where $1\leq j \leq k$. Hence $\Psi_j^{i}$ is the event that during the service time of the $i^{th}$ successful packet a new update arrived at the $j^{th}$ step of the service time (i.e, state $j$ or $j'$) and then no new update arrived for the remainder of the service time. The superscript $(i)$ is used to indicate that we are dealing with the $i^{th}$ successful packet. For $j=0$, $\Psi_0^i$ is the event that the $i^{th}$ successful packet leaves the queue empty. Note that for every $i$, $\{\Psi_j^{i},1\leq j \leq k\}$ is a partition of the probability space.
  
  It is sufficient to condition on the event $\Psi_0^{i-1}$ in order to ensure conditional independence between $T_{i-1}$ and $Y_{i-1}$. This is due to the following fact: given $\Psi_0^{i-1}$, we know that the $(i-1)^{th}$ successful packet left the queue empty and hence we have a situation identical to that of the with preemption case (see Section~\ref{preempt}) and $T_{i-1}$ and $Y_{i-1}$ are independent. On the other hand, given $\overline{\Psi_0^{i-1}}$, the buffer is not empty and thus a new packet will be served directly after the departure of the $(i-1)^{th}$ successful packet. In this case, the interdeparture time $Y_{i-1}$ is simply the service time of the $i^{th}$ successful packet whose value is independent of $T_{i-1}=W_{i-1}+S_{i-1}$, where $W_{i-1}$ the waiting time and $S_{i-1}$ is the service time of the $(i-1)^{th}$ successful packet (see Figure~\ref{fig1b}).
  
  Although conditioning on $\Psi_0^{i-1}$ is enough to obtain independence between $T_{i-1}$ and $Y_{i-1}$, we will need to condition on the two independent events $\Psi_j^{i-1}$ and $\Psi_l^{i-2}$ in order to be able to calculate the conditional expectation of $T_{i-1}$. However, it is clear that conditioning on these two events also leads to the independence between $T_{i-1}$ and $Y_{i-1}$. Hence we get
  \begin{align}
  \label{eq27}
  \mathbb{E}&(T_{i-1}Y_{i-1}) \nonumber\\
	    %&= \mathbb{E}\left(\mathbb{E}\left(T_{i-1}Y_{i-1}|\Psi_j^{i-1}\Psi_l^{i-2}\right)\right)\nonumber\\
	    &= \sum_{j,l=0}^k\left(\mathbb{E}\left(T_{i-1}|\Psi_j^{i-1}\Psi_l^{i-2}\right)\mathbb{E}\left(Y_{i-1}|\Psi_j^{i-1}\Psi_l^{i-2}\right)\right.\nonumber\\
	    &\qquad\qquad \left.\vphantom{\mathbb{E}\left(T_{i-1}|\Psi_j^{i-1}\Psi_l^{i-2}\right)}\times\mathbb{P}(\Psi_j^{i-1})\mathbb{P}(\Psi_l^{i-2})\right) .
  \end{align} 
  We start by computing $\mathbb{E}\left(T_{i-1}|\Psi_j^{i-1}\Psi_l^{i-2}\right)=\mathbb{E}\left(W_{i-1}|\Psi_j^{i-1}\Psi_l^{i-2}\right)+\mathbb{E}\left(S_{i-1}|\Psi_j^{i-1}\Psi_l^{i-2}\right)$.
  
  The waiting time of the $(i-1)^{th}$ successful packet doesn't depend on $\Psi_j^{i-1}$ since they are disjoint in time, but it does depend on $\Psi_l^{i-2}$. In fact, given $\Psi_0^{i-2}$, the $(i-1)^{th}$ successful packet will not wait and start service upon arrival since the $(i-2)^{th}$ successful packet left the queue empty. However, given $\Psi_l^{i-2}$ with $l\neq 0$, the $(i-1)^{th}$ successful packet arrived when the $(i-2)^{th}$ successful packet was at state $l$ or $l'$ of its service time. In order to find the distribution of $W_{i-1}$ conditioned on $\Psi_l^{i-2}$ we introduce the following event: $\Psi_{l,n}^{i} = \left\{\sum_{g=1}^n \Lambda_{l,g}^{i}<A_l^{i}, \sum_{g=1}^{n+1}\Lambda_{l,g}^{i}>\sum_{m=l}^k A_m^{i}\right\}$, where $\{\Lambda_{l,g}^{i}\}_{g\geq1}$ is the sequence of interarrival times after the $(i)^{th}$ successful packet enters state $l$. Notice that $\Psi_{l,n}^{i}$ is the event that exactly $n$ updates arrived when the $i^{th}$ successful packet was in state $l$ (or $l'$) and then no more updates were generated for the remainder of the service time. Hence $\Psi_l^{i}=\cup_{n=1}^\infty \Psi_{l,n}^{i}$. So conditioned on $\Psi_{l,n}^{(i-2)}$ we have 
  \begin{align}
  \label{eq28a}
  W_{i-1} &= \sum_{m=l}^k A_m^{(i-2)}-\sum_{g=1}^n \Lambda_{l,g}^{(i-2)}\nonumber\\
	  &= (A_l^{(i-2)}-\sum_{g=1}^n \Lambda_{l,g}^{(i-2)}) + \sum_{m=l+1}^k A_m^{(i-2)}
  \end{align}
  It can be shown that, conditioned on $\{\sum_{g=1}^n \Lambda_{l,g}^{i}<A_l^{i}\}$, $(A_l^{(i-2)}-\sum_{g=1}^n \Lambda_{l,g}^{(i-2)})$ has an exponential distribution with rate $\frac{1}{\theta}$. This means that under this condition alone, $W_{i-1}$ has the same distribution as the sum of $k-l+1$ independent exponential random variables with rate $\frac{1}{\theta}$. If we further condition on $\left\{\sum_{g=1}^{n+1}\Lambda_{l,g}^{i}>\sum_{m=l}^k A_m^{i}\right\}$ and use Lemma~\ref{lemma1}, we deduce that conditioned on $\Psi_{l,n}^{(i-2)}$, $W_{i-1}$ has a gamma distribution with parameters $\left(k-l+1, \frac{\theta}{1+\lambda\theta}\right)$. Now since $\Psi_l^{i-2}=\cup_{n=1}^\infty \Psi_{l,n}^{i-2}$, we conclude that if we condition on $\Psi_l^{i-2}$, $W_{i-1}$ is distributed as $\Gamma\left(k-l+1, \frac{\theta}{1+\lambda\theta}\right)$. Therefore,
  \begin{equation}
  \label{eq28}
  \mathbb{E}\left(W_{i-1}|\Psi_j^{i-1}\Psi_l^{i-2}\right) = \left\{ \begin{array}{ll}
								      0 & \text{if\ } l=0\\
								      \frac{(k-l+1)\theta}{1+\lambda\theta} & \text{if\ } l\neq 0
								    \end{array}\right. .
  \end{equation}
  Now we turn our attention to $\mathbb{E}\left(S_{i-1}|\Psi_j^{i-1}\Psi_l^{i-2}\right)$. One first notices that the service time $S_{i-1}$ of the $(i-1)^{th}$ successful packet is independent of its arrival time given by the event $\Psi_l^{i-2}$ since we assumed independence between service time and interarrival time. Hence, $\mathbb{E}\left(S_{i-1}|\Psi_j^{i-1}\Psi_l^{i-2}\right)=\mathbb{E}\left(S_{i-1}|\Psi_j^{i-1}\right)$. For the case $j=0$, we get 
  \begin{align}
  \label{eq29}
  \mathbb{E}\left(S_{i-1}|\Psi_0^{i-1}\right)&= \mathbb{E}\left(\sum_{m=1}^kA_m^{i-1}|\sum_{m=1}^kA_m^{i-1}<X\right)\nonumber\\
					      &= \frac{k\theta}{1+\lambda\theta}
  \end{align}
  where the last equality is obtained by applying Lemma~\ref{lemma1} with $G=\sum_{m=1}^kA_m^{i-1}$ and $F=X$. As for the case $j\neq0$, we get
  \begin{align}
  \label{eq30}
  \mathbb{E}&\left(S_{i-1}|\Psi_j^{i-1}\right)\nonumber\\
  &= \mathbb{E}\left(\sum_{m=1}^kA_m^{i-1}|A_j^{i-1}>\Lambda_j^{i-1}, \sum_{m=j+1}^k A^{i-1}_m<X\right)\nonumber\\
  &= \sum_{m=1}^{j-1}\mathbb{E}(A_m^{i-1})+\mathbb{E}(A_j^{i-1}|A_j^{i-1}>\Lambda_j^{i-1})\nonumber\\
  & \qquad +\mathbb{E}\left(\sum_{m=j+1}^kA^{i-1}_m\Bigg|\sum_{m=j+1}^kA^{i-1}_m<X\right)\nonumber\\
  &\overset{(a)}{=} (j-1)\theta + \frac{\theta(2+\lambda\theta)}{1+\lambda\theta}+\frac{(k-j)\theta}{1+\lambda\theta}\nonumber\\
  &= \frac{\theta(1+k+j\lambda\theta)}{1+\lambda\theta}
  \end{align}
  where the third term in $(a)$ is obtained by applying Lemma~\ref{lemma1} with $G=\sum_{m=j+1}^kA^{i-1}_m$ and $F=X$. Therefore, combining \eqref{eq29} and ~\eqref{eq30} we get,
  \begin{equation}
  \label{eq31}
  \mathbb{E}\left(T_{i-1}|\Psi_j^{i-1}\Psi_l^{i-2}\right) = \left\{\begin{array}{ll}
								    \frac{k\theta}{1+\lambda\theta} & \text{if\ } l=0, j=0\\
								    \frac{\theta(k+1+j\lambda\theta)}{1+\lambda\theta} & \text{if\ } l=0, j>0\\
								    \frac{\theta(2k-l+1)}{1+\lambda\theta} & \text{if\ } l>0, j=0\\
								    \frac{\theta(2k-l+2+j\lambda\theta)}{1+\lambda\theta} & \text{if\ } l>0, j>0\\
								    \end{array}
  \right. .
  \end{equation}
  Now we need to compute $\mathbb{E}\left(Y_{i-1}|\Psi_j^{i-1}\Psi_l^{i-2}\right)$. For this end, observe that $Y_{i-1}$ is independent of $\Psi_l^{i-2}$ given that they don't overlap in time. Moreover, for $j=0$, the $(i-1)^{th}$ successful packet leaves the queue empty and thus we will need to wait an exponential amount of time $X'$ of rate $\lambda$ before the $i^{th}$ successful packet arrives and is served directly. Hence, conditioned on $\Psi_0^{i-1}$, $Y_{i-1}$ has same distribution as $(X'+S)$ with $X'$ and $S$ independent. On the other hand, for $j\neq 0$, the $(i-1)^{th}$ successful packet leaves the queue with another packet waiting in the buffer ready to be served. Thus in this case, $Y_{i-1}$ is simply the service time of the $i^{th}$ successful packet. To sum up,
  \begin{equation}
  \label{eq32}
  \mathbb{E}\left(Y_{i-1}|\Psi_j^{i-1}\Psi_l^{i-2}\right) = \left\{\begin{array}{ll}
								    \frac{1}{\lambda}+k\theta & \text{if\ }j=0\\
								    k\theta & \text{if\ }j>0
								    \end{array}
  \right.
  \end{equation}
  To compute $\mathbb{E}(T_{i-1}Y_{i-1})$ we still need the probability $\mathbb{P}(\Psi_j^{i-1})$. For $j>0$, we use the fact that $\Psi_j^{i-1}$ is the intersection of two independent events and find that $\mathbb{P}(\Psi_j^{i-1})=\frac{\lambda\theta}{(1+\lambda\theta)^{k-j+1}}$. As for $j=0$, we have already seen in Section~\ref{preempt} that $\mathbb{P}(\Psi_0^{i-1})=p=\left(\frac{1}{1+\lambda\theta}\right)^k$. These probabilities are independent of the index $i$ and thus we can find $\mathbb{P}(\Psi_l^{i-2})$ by replacing $j$ by $l$ in the previous expressions. Combining this results with \eqref{eq31}, ~\eqref{eq32} we obtain after some tedious calculations
  \begin{align}
  \label{eq33}
  \mathbb{E}&(T_{i-1}Y_{i-1}) = \frac{k\theta}{\lambda}(1+k\lambda\theta)+q^k\left(\frac{1-k\lambda\theta(2k+1)}{\lambda^2}\right)\nonumber\\
	    &+q^{k+1}\left(\frac{k\theta(1+k\lambda\theta+2k)}{\lambda}\right)-\frac{1}{\lambda^2}q^{2k}-\frac{k\theta}{\lambda}q^{2k+1}
  \end{align}
  with $q=\frac{1}{1+\lambda\theta}$.\\
  The last term to compute in order to obtain $\mathbb{E}(Q_i)$ is $$\mathbb{E}(Y_{i-1}^2)= \mathbb{E}(Y_{i-1}^2|\Psi_0^{i-1})\mathbb{P}(\Psi_0^{i-1})+\mathbb{E}(Y_{i-1}^2|\overline{\Psi_0^{i-1}})\mathbb{P}(\overline{\Psi_0^{i-1}}).$$ Based on our previous observations we know that $\mathbb{E}(Y_{i-1}^2|\Psi_0^{i-1})=\mathbb{E}((X'+S)^2)$ and $\mathbb{E}(Y_{i-1}^2|\overline{\Psi_0^{i-1}})=\mathbb{E}(S^2)$. Using these facts we get
  \begin{equation}
  \label{eq34}
  \mathbb{E}(Y_{i-1}^2) = k\theta^2+k^2\theta^2+q^k\left(\frac{2+2k\lambda\theta}{\lambda^2}\right).
  \end{equation}
  Combining \eqref{eq33} and ~\eqref{eq34}, we finally get
  \begin{align}
  \label{eq35}
  \mathbb{E}&(Q_i) = \frac{k\theta(2+\lambda\theta+3k\lambda\theta)}{2\lambda}+2q^k\left(\frac{1-k^2\lambda\theta}{\lambda^2}\right)\nonumber\\
	    & +q^{k+1}\left(\frac{k\theta(1+k\lambda\theta+2k)}{\lambda}\right)-\frac{1}{\lambda^2}q^{2k}-\frac{k\theta}{\lambda}q^{2k+1}.
  \end{align}

  \subsubsection{Computing the effective rate}
  To calculate the effective rate we first observe that the event \{packet is successfully received\} is equivalent to the event \{packet passes by the 1-state\}. Hence if we \lq uniformize\rq\ the Markov chain so that the time spent at each state is exponential with rate $\lambda+\frac{1}{\theta}$, we get $\lambda_e = \left(\lambda+\frac{1}{\theta}\right)\pi_1$ where $\pi_1$ is the steady-state probability of the 1-state in the \lq uniformized\rq\ Markov chain. The analysis of such chain (\cite{ross}, chapter 5) gives $\pi_1=\frac{q(1-q)}{q^{k+1}+k(1-q)}$. Therefore,
  \begin{equation}
  \label{eq36}
  \lambda_e = \frac{\lambda(1+\lambda\theta)^k}{1+k\lambda\theta(1+\lambda\theta)^k}.
  \end{equation}
  
  Finally, replacing  $\mathbb{E}(Q_i)$ and $\lambda_e$ in $\Delta=\lambda_e\mathbb{E}(Q_i)$ by their expressions in \eqref{eq35} and ~\eqref{eq36}, we obtain our result.
\end{proof}

\subsection{Average peak  age}
\begin{proposition}
The average peak age in the LCFS without preemption scheme assuming Erlang $E(k,\theta)$ service time is:
\begin{equation}
 \label{eq40}
 \mathbb{E}(P_i) = \frac{1}{\lambda}+2k\theta-\frac{k\theta}{(1+\lambda\theta)^{k+1}}.
\end{equation}
\end{proposition}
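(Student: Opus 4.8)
The plan is to use \eqref{eq9}, which reduces the problem to computing the steady-state means $\mathbb{E}(T_{i-1})$ and $\mathbb{E}(Y_{i-1})$; everything needed is already available from the proof of the average-age proposition, namely the conditional expectations \eqref{eq28}--\eqref{eq32} and the partition probabilities $\mathbb{P}(\Psi_0^{i-1})=p=\bigl(\frac{1}{1+\lambda\theta}\bigr)^{k}$ and $\mathbb{P}(\Psi_j^{i-1})=\frac{\lambda\theta}{(1+\lambda\theta)^{k-j+1}}$ for $1\le j\le k$, with $\{\Psi_j^{i-1}\}_{0\le j\le k}$ a partition.

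First I would dispatch $\mathbb{E}(Y_{i-1})$, which is the easy piece: by \eqref{eq32} the conditional mean of $Y_{i-1}$ depends only on whether $j=0$, so $\mathbb{E}(Y_{i-1})=\bigl(\frac1\lambda+k\theta\bigr)p+k\theta(1-p)=k\theta+\frac{p}{\lambda}=k\theta+\frac{1}{\lambda(1+\lambda\theta)^{k}}$. As a sanity check this equals $1/\lambda_e$ with $\lambda_e$ as in \eqref{eq36}, which is what one expects since $\lambda_e$ is the long-run rate of successful departures.

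Next I would compute $\mathbb{E}(T_{i-1})$ via the decomposition $T_{i-1}=W_{i-1}+S_{i-1}$. Because \eqref{eq28} shows that the conditional mean of $W_{i-1}$ depends only on $l$ (and vanishes for $l=0$), marginalizing out $j$ gives $\mathbb{E}(W_{i-1})=\sum_{l=1}^{k}\frac{(k-l+1)\theta}{1+\lambda\theta}\,\mathbb{P}(\Psi_l^{i-2})$; setting $q=\frac{1}{1+\lambda\theta}$ and using $\lambda\theta q=1-q$, this is an arithmetico-geometric sum that collapses to $\mathbb{E}(W_{i-1})=\frac1\lambda\bigl(1-(k+1)q^{k}+kq^{k+1}\bigr)$. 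Averaging \eqref{eq29}--\eqref{eq30} over $j$ against the same probabilities gives $\mathbb{E}(S_{i-1})=k\theta$ --- as it must, since the service time of a packet that is eventually delivered is simply an $E(k,\theta)$ variable --- so $\mathbb{E}(T_{i-1})=k\theta+\frac1\lambda\bigl(1-(k+1)q^{k}+kq^{k+1}\bigr)$. One could instead average \eqref{eq31} directly over the double partition $\{\Psi_j^{i-1}\Psi_l^{i-2}\}$; the $W+S$ split merely trades a double sum for two single sums.

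Finally, adding the two means and simplifying with $q=\frac{1}{1+\lambda\theta}$, the terms proportional to $\frac{1}{\lambda(1+\lambda\theta)^{k}}$ recombine and leave $\mathbb{E}(P_i)=\frac1\lambda+2k\theta-\frac{k\theta}{(1+\lambda\theta)^{k+1}}$, which is \eqref{eq40}. The only real obstacle is bookkeeping: correctly carrying out the index shifts in the arithmetico-geometric summation for $\mathbb{E}(W_{i-1})$ and checking the final cancellation, since every probabilistic ingredient is inherited verbatim from the average-age derivation.
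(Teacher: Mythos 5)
Your proposal is correct and follows essentially the same route as the paper: both use \eqref{eq9} together with the conditional expectations \eqref{eq28}--\eqref{eq32} and the partition probabilities $\mathbb{P}(\Psi_j^{i-1})$, the only cosmetic difference being that you split $T_{i-1}=W_{i-1}+S_{i-1}$ and use $\mathbb{E}(S_{i-1})=k\theta$ rather than averaging \eqref{eq31} over the double partition directly, which, as you note, is equivalent. Your intermediate values, e.g.\ $\mathbb{E}(Y_{i-1})=k\theta+\frac{1}{\lambda(1+\lambda\theta)^k}$ and $\mathbb{E}(T_{i-1})=k\theta+\frac{1}{\lambda}\bigl(1-(k+1)q^{k}+kq^{k+1}\bigr)$ with $q=\frac{1}{1+\lambda\theta}$, agree with the paper's \eqref{eq38}--\eqref{eq39} and yield \eqref{eq40}.
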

\begin{proof}
  We know that $\mathbb{E}(P_i)=\mathbb{E}(T_{i-1})+\mathbb{E}(Y_{i-1})$. We calculate these two terms as follows
  \begin{align}
  \label{eq38}
  \mathbb{E}(T_{i-1})&= \sum_{j,l=0}^k \mathbb{E}\left(T_{i-1}|\Psi_j^{i-1}\Psi_l^{i-2}\right)\mathbb{P}(\Psi_j^{i-1})\mathbb{P}(\Psi_l^{i-2})\nonumber\\
		      &= \frac{1}{\lambda}+k\theta-q^{k+1}\left(\frac{1+\lambda\theta+k\lambda\theta}{\lambda}\right),
  \end{align}
  where we used \eqref{eq31} for the last equality. For $\mathbb{E}(Y_{i-1})$ we will only condition on $\Psi_0^{i-1}$. Hence using \eqref{eq32}, we get
  \begin{align}
  \label{eq39}
  \mathbb{E}(Y_{i-1}) &= \mathbb{E}(Y_{i-1}|\Psi_0^{i-1})\mathbb{P}(\Psi_0^{i-1})+\mathbb{E}(Y_{i-1}|\overline{\Psi_0^{i-1}})\mathbb{P}(\overline{\Psi_0^{i-1}})\nonumber\\
		      &= k\theta + \frac{q^k}{\lambda}.
  \end{align}
  Thus, combining the above two results we obtain our result.
\end{proof}

%----------------------------------------------------------------------------
% Deterministic service time
%----------------------------------------------------------------------------
\section{Age of information for deterministic service time}
\label{deterministic}
In order to compute the four ages of interest under a deterministic service time assumption, we use Lemma~\ref{lemma3}. For that, we fix the mean of the service times $S_n$ to $\mathbb{E}(S_n)= \frac{1}{\mu}$, for some $\mu>0$, and let $k\to\infty$. It is beyond the scope of this paper to show that if $S_n\overset{d}{\to} Z$, as $k\to\infty$ then we also have convergence in the average ages, i.e, $\Delta_{S_n}\to\Delta_Z$. Here $\Delta_{S_n}$ refers to the average age corresponding to service time $S_n$. However, we will use this result to derive the different ages.

\subsection{LCFS with preemption}
Letting $k\to\infty$ in \eqref{eq25} and ~\eqref{eq26}, we get
\begin{align}
 \label{eq41}
 \Delta &= \frac{e^{\lambda/\mu}}{\lambda}\\
 \mathbb{E}(P_i) &= \frac{1}{\mu}+\frac{e^{\lambda/\mu}}{\lambda}
\end{align}

\subsection{LCFS without preemption}
Letting $k\to\infty$ in \eqref{eq37} and ~\eqref{eq40}, we get
\begin{align}
 \label{eq42}
 &\Delta = \frac{2(2+\rho-\rho^2)-2e^{-\rho}(1+\rho)+\rho e^{\rho}(2+3\rho)}{2\lambda\left(1+\rho e^{\rho}\right)}\\
 &\mathbb{E}(P_i) = \frac{1}{\lambda}+\frac{2-e^{-\rho}}{\mu}
\end{align}
where $\rho=\frac{\lambda}{\mu}$.

%----------------------------------------------------------------------------
% Numerical results
%----------------------------------------------------------------------------
\section{Numerical results}
\label{numerical}
In this section we show that the theoretical results obtained in the previous sections match the simulations. We also compare the performance of the two transmission schemes of interest as well as the effect of the parameter $k$ on each of them. First it is worthy to specify that all simulations were done using gamma distributed service times with all having the same mean $k\theta=1$, except for the deterministic case where the service time is fixed to 1. Figure~\ref{fig4} presents the average age under LCFS with preemption scheme and gamma distributed service time. Two observations can be made based on this plot: $(i)$ the theoretical curves given by \eqref{eq25} and ~\eqref{eq41} coincide with the empirical curves and $(ii)$ as the value of $k$ increases, the average age increases for all values of $\lambda$. This means that, under LCFS with preemption, the average age assuming deterministic service time $(k\to\infty)$ is higher than the average age assuming a regular gamma distributed service. In particular, it is higher than the average age assuming memoryless time. This observation can be explained by the fact that the probability of packet being preempted is given by $1-p=1-\left(\frac{1}{1+\lambda\theta}\right)^k$ (refer to Section~\ref{preempt}) which is an increasing function of $k$. Therefore, as $k$ increases the receiver will have to wait on average a longer time till a new update is delivered since the preempting rate becomes higher. This analysis is true for any value of $\lambda$, hence the phenomenon seen in Figure~\ref{fig4}.

In a parallel setting, Figure~\ref{fig5} presents the average age under LCFS without preemption. In this case also two observations can be made: $(i)$ the theoretical curves given by \eqref{eq37} and ~\eqref{eq42} match the empirical results and $(ii)$ as the value of $k$ increases, the average age decreases for almost all $\lambda$ (except for values close to 0 where all distributions behave similarly). This difference in performance is especially seen at high $\lambda$. We give here a quick intuition that explains this behavior. When $\lambda$ is high $(\lambda\to\infty)$, the time where the queue is empty goes to 0 and thus the queue is always transmitting. This also means that on average the waiting time $W_{i-1}$ goes to 0. Given these two observations, one can say that the system time $T_{i-1}$ and the interdeparture time $Y_{i-1}$ will have almost the same distribution as the service time, while being almost independent. Thus $\mathbb{E}(Q_i)\overset{\lambda\to\infty}{\longrightarrow}\mathbb{E}(S)^2+\frac{\mathbb{E}(S^2)}{2}$. As for the effective rate $\lambda_e$, since the queue is almost always busy, the average rate at which the receiver gets new update is nothing but the inverse of the average service time, i.e $\lambda_e\overset{\lambda\to\infty}{\longrightarrow}\frac{1}{\mathbb{E}(S)}$. Therefore, $\Delta\overset{\lambda\to\infty}{\longrightarrow}\mathbb{E}(S)+\frac{\mathbb{E}(S^2)}{2\mathbb{E}(S)}=\frac{\theta}{2}+\frac{3k\theta}{2}$. This result --- which is also obtained by taking the limit over $\lambda$ in \eqref{eq37} --- is decreasing with $k$. Hence the behavior seen in Figure~\ref{fig5}.

Next, we compare the performance of the two transmission schemes in two models: for gamma distributed and deterministic service time. Figure~\ref{fig6} shows the average age under LCFS with and without preemption when the service time is taken to be gamma distributed with $k=2$. In this case we notice that for small $\lambda$ the two schemes perform similarly. However, for $\lambda$'s around 1, the LCFS with preemption scheme performs slightly better before being outperformed by the LCFS without preemption scheme at high $\lambda$'s. Practically, this means that if one is using a medium whose service time is modeled as a gamma random variable, the best strategy (among the considered ones) is not to preempt while increasing the update generation rate as much as possible. This strategy also applies when the service time is deterministic as seen in Figure~\ref{fig7}. In fact, we observe that for deterministic service time and for all values of $\lambda$, the average age and the average peak age for the LCFS without preemption scheme are smaller than the average age and average peak age for the LCFS with preemption respectively.

\begin{figure}[!t]
 \centering
 \includegraphics[scale=0.47]{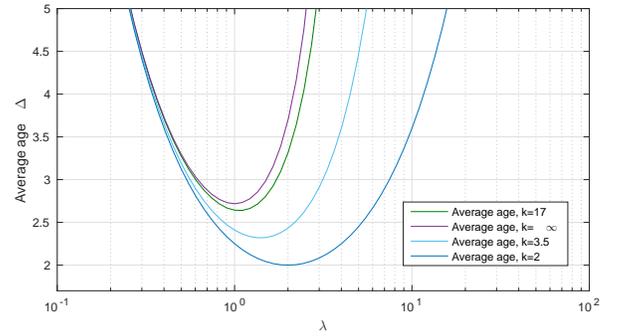}
 \caption{Average age for gamma service time $S$ with $\mathbb{E}(S)=1$, different $k$ and LCFS with preemption}
 \label{fig4}
\end{figure}

\begin{figure}[!t]
 \centering
 \includegraphics[scale=0.47]{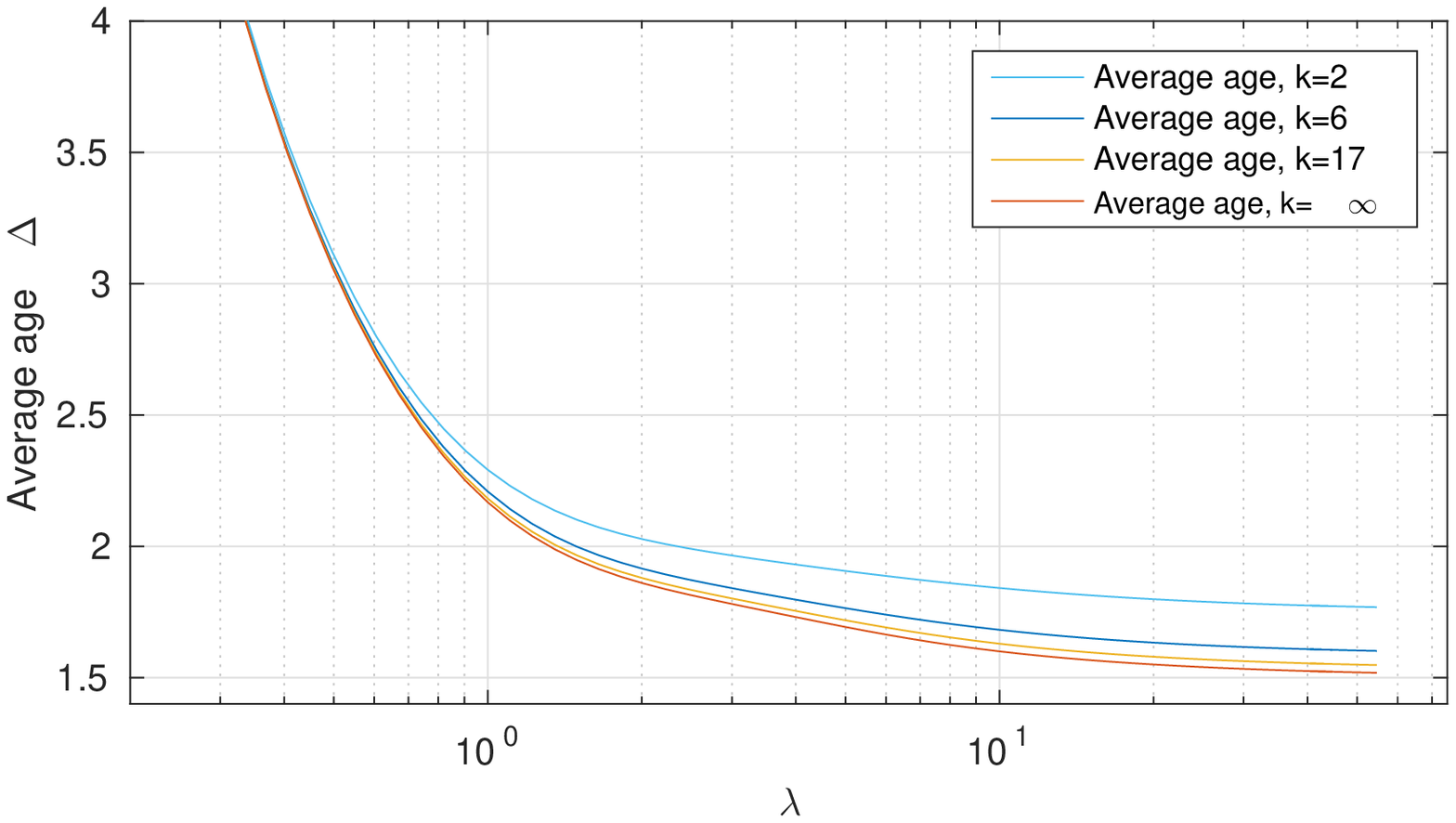}
 \caption{Average age for gamma service time $S$ with $\mathbb{E}(S)=1$, different $k$ and LCFS without preemption}
 \label{fig5}
\end{figure}

\begin{figure}[!t]
 \centering
 \includegraphics[scale=0.47]{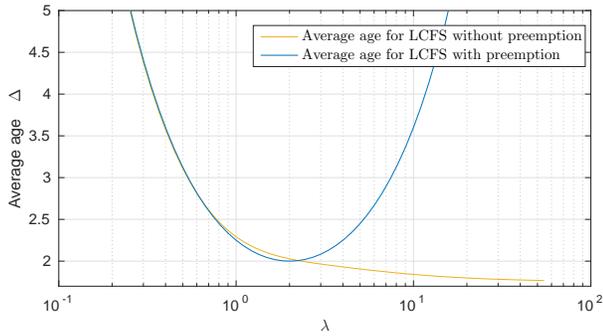}
 \caption{Average age for gamma service time $S$ with $k=2$ and $\mathbb{E}(S)=1$}
 \label{fig6}
\end{figure}

\begin{figure}[!t]
 \centering
 \includegraphics[scale=0.47]{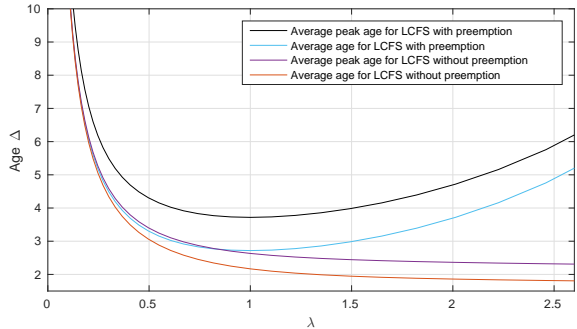}
 \caption{Average age and average peak age for deterministic service time}
 \label{fig7}
\end{figure}

%----------------------------------------------------------------------------
% Conclusion
%----------------------------------------------------------------------------

\section{Conclusion}
\label{conclusion}

We considered the gamma distribution as a model for the service time in status update systems. We computed and analyzed the average and average peak age of information under two schemes: LCFS with preemption and LCFS without preemption. This allowed us to evaluate these metrics for deterministic service time. This suggests that considering gamma distributions for similar problems can be a good idea since the Gamma distributions (or at least Erlang distributions) are practically relevant as they can be used to model the total service time for relay networks.

% conference papers do not normally have an appendix

%----------------------------------------------------------------------------
% Acknowledgment
%----------------------------------------------------------------------------

% use section* for acknowledgment
\section*{Acknowledgment}

The authors would like to thank Emre Telatar for helpful discussions.

% trigger a \newpage just before the given reference
% number - used to balance the columns on the last page
% adjust value as needed - may need to be readjusted if
% the document is modified later
%\IEEEtriggeratref{8}
% The "triggered" command can be changed if desired:
%\IEEEtriggercmd{\enlargethispage{-5in}}

% references section

% can use a bibliography generated by BibTeX as a .bbl file
% BibTeX documentation can be easily obtained at:
% http://mirror.ctan.org/biblio/bibtex/contrib/doc/
% The IEEEtran BibTeX style support page is at:
% http://www.michaelshell.org/tex/ieeetran/bibtex/
\bibliographystyle{IEEEtran}
% argument is your BibTeX string definitions and bibliography database(s)
\bibliography{IEEEabrv,Age_of_information_gamma_service_time_with_without_preemption}

% Generated by IEEEtran.bst, version: 1.13 (2008/09/30)
\begin{thebibliography}{1}
\providecommand{\url}[1]{#1}
\csname url@samestyle\endcsname
\providecommand{\newblock}{\relax}
\providecommand{\bibinfo}[2]{#2}
\providecommand{\BIBentrySTDinterwordspacing}{\spaceskip=0pt\relax}
\providecommand{\BIBentryALTinterwordstretchfactor}{4}
\providecommand{\BIBentryALTinterwordspacing}{\spaceskip=\fontdimen2\font plus
\BIBentryALTinterwordstretchfactor\fontdimen3\font minus
  \fontdimen4\font\relax}
\providecommand{\BIBforeignlanguage}[2]{{%
\expandafter\ifx\csname l@#1\endcsname\relax
\typeout{** WARNING: IEEEtran.bst: No hyphenation pattern has been}%
\typeout{** loaded for the language `#1'. Using the pattern for}%
\typeout{** the default language instead.}%
\else
\language=\csname l@#1\endcsname
\fi
#2}}
\providecommand{\BIBdecl}{\relax}
\BIBdecl

\bibitem{realtime}
S.~Kaul, R.~Yates, and M.~Gruteser, ``Real-time status: How often should one
  update?'' in \emph{INFOCOM, 2012 Proceedings IEEE}, March 2012, pp.
  2731--2735.

\bibitem{multiple}
R.~Yates and S.~Kaul, ``Real-time status updating: Multiple sources,'' in
  \emph{Information Theory Proceedings (ISIT), 2012 IEEE International
  Symposium on}, July 2012, pp. 2666--2670.

\bibitem{random}
C.~Kam, S.~Kompella, and A.~Ephremides, ``Age of information under random
  updates,'' in \emph{Information Theory Proceedings (ISIT), 2013 IEEE
  International Symposium on}, July 2013, pp. 66--70.

\bibitem{queues}
S.~Kaul, R.~Yates, and M.~Gruteser, ``Status updates through queues,'' in
  \emph{Information Sciences and Systems (CISS), 2012 46th Annual Conference
  on}, March 2012, pp. 1--6.

\bibitem{ontheage}
\BIBentryALTinterwordspacing
M.~Costa, M.~Codreanu, and A.~Ephremides, ``On the age of information in status
  update systems with packet management,'' \emph{CoRR}, vol. abs/1506.08637,
  2015. [Online]. Available: \url{http://arxiv.org/abs/1506.08637}
\BIBentrySTDinterwordspacing

\bibitem{ross}
S.~M. Ross, \emph{{Stochastic Processes (Wiley Series in Probability and
  Statistics)}}, 2nd~ed.\hskip 1em plus 0.5em minus 0.4em\relax Wiley, Feb.
  1995.

\end{thebibliography}
%
% <OR> manually copy in the resultant .bbl file
% set second argument of \begin to the number of references
% (used to reserve space for the reference number labels box)
%\begin{thebibliography}{1}
%\bibitem{IEEEhowto:kopka}

%H.~Kopka and P.~W. Daly, \emph{A Guide to \LaTeX}, 3rd~ed.\hskip 1em plus
%  0.5em minus 0.4em\relax Harlow, England: Addison-Wesley, 1999.

%\end{thebibliography}

% that's all folks
\end{document}